\def\d{\mathrm{d}}
\def\laweq{\buildrel \d \over =}
\newcommand{\cov}{\mathrm{Cov}}
\newcommand{\ES}{\mathrm{ES}}
\newcommand{\E}{\mathbb{E}}
\newcommand{\R}{\mathbb{R}}
\newcommand{\Ran}{\mathrm{Ran}}
\newcommand{\p}{\mathbb{P}}
\newcommand{\id}{\mathds{1}}
\newcommand{\X}{\mathbf X}
\renewcommand{\ge}{\geqslant}
\renewcommand{\le}{\leqslant}
\renewcommand{\geq}{\geqslant}
\renewcommand{\leq}{\leqslant}
\renewcommand{\epsilon}{\varepsilon}
\theoremstyle{plain}
\newtheorem{theorem}{Theorem}
\newtheorem{lemma}{Lemma}
\newtheorem{proposition}{Proposition}
\theoremstyle{definition}
\newtheorem{definition}{Definition}
\newtheorem{example}{Example}
\theoremstyle{remark}
\newtheorem{remark}{Remark}
\theoremstyle{definition}
\renewcommand{\cite}{\citet}
\DeclareFontFamily{U}{mathx}{\hyphenchar\font45}
\DeclareFontShape{U}{mathx}{m}{n}{
      <5> <6> <7> <8> <9> <10>
      <10.95> <12> <14.4> <17.28> <20.74> <24.88>
      mathx10
      }{}
\DeclareSymbolFont{mathx}{U}{mathx}{m}{n}
\DeclareMathSymbol{\bigtimes}{1}{mathx}{"91}
\begin{document} 
 
\title{The checkerboard copula and dependence concepts\thanks{Ruodu Wang acknowledges financial support from 
the Natural Sciences and Engineering Research Council of Canada (RGPIN-2024-03728, CRC-2022-00141).
Ruixun Zhang acknowledges research support from the National Key R\&D Program of China (2022YFA1007900) and the National Natural Science Foundation of China (12271013, 72342004).}
}

\author{Liyuan Lin\thanks{Department of Statistics and Actuarial Science, University of Waterloo, Canada.   \texttt{l89lin@uwaterloo.ca}}
\and Ruodu Wang\thanks{Department of Statistics and Actuarial Science, University of Waterloo, Canada.   \texttt{wang@uwaterloo.ca}}
\and Ruixun Zhang\thanks{School of Mathematical Sciences, Center for Statistical Science, and National Engineering Laboratory for Big Data Analysis and Applications, Peking University, China.   \texttt{zhangruixun@pku.edu.cn}}
\and Chaoyi Zhao\thanks{School of Mathematical Sciences, Peking University, China; Sloan School of Management and Laboratory for Financial Engineering, MIT, United States.   \texttt{zhaochaoyi@pku.edu.cn}}
}

\maketitle

\begin{abstract}
We study the problem of choosing the copula when the marginal distributions of a random vector are not all continuous. Inspired by {four} motivating examples including simulation from copulas, stress scenarios, co-risk measures, and dependence measures, we propose to use the checkerboard copula, that is, intuitively, the unique copula with a distribution that is as uniform as possible within regions of flexibility. We show that the checkerboard copula has the largest Shannon entropy, which means that it carries the least information among all possible copulas for a given random vector. Furthermore, the checkerboard copula preserves the dependence information of the original random vector, leading to two applications in the context of diversification penalty and impact portfolios. {The numerical and empirical results illustrate the benefits of using the checkerboard copula in the calculation of co-risk measures. }

\textbf{Keywords}:  Orthant dependence; positive and negative association; Shannon entropy; CoVaR; simulation
\end{abstract}

 \section{Introduction}
 \label{sec:1}

The copula theory has been actively studied over the past few decades with many applications in statistics, finance, engineering, and the natural sciences; for an introduction, see the monographs of \cite{N06} and \cite{J14}.

It is well known through Sklar's theorem (\citet[Theorem 2.10.9]{N06}) that the copula of a random vector is unique if and only if it has continuous marginal distributions. However, when the marginals are non-continuous, the uniqueness of the copula no longer holds. Discrete marginals are common in empirical studies, as the collected data is often discrete. Several works, including \cite{M96}, \cite{C02}, \cite{PSU19}, and \cite{G20}, discuss the dependence structure of discrete data through copulas.
\cite{GN07} discussed difficulties in identifying copulas for discrete distributions.   
The purpose of this paper is to understand whether it is possible to identify a canonical copula for a random vector in some sense if it does not have continuous marginal distributions. 

To answer this question, we seek inspiration from four applications.
Let $\mathbf X = (X_1,\dots,X_d)$ be a $d$-dimensional random vector with $d\geq 2$, which may have non-unique copulas.  Denote by $\mathcal C_{\mathbf X}$ the set of all copulas of $\mathbf X$.  
For a random variable $X$, its probability integral transform $U$ 
is a uniform random variable on $[0,1]$ satisfying $F^{-1}(U)=X$ almost surely (a.s.), where $F$ is the distribution function of $X$ and $F^{-1}$ is the quantile function of $X$.
Let $(U_1,\dots,U_d)$ be any vector of probability integral transforms of $X_1,\dots,X_d$ with a joint distribution $C$; certainly, $C$ is a copula of $\mathbf X$. All random variables live in an atomless probability space $(\Omega,\mathcal F,\p)$.

\begin{enumerate}
 
 \item \textbf{Simulating from the copula of $\X$.} 
 One of the most popular applications of copulas in finance is to model default correlation, as famously done by \cite{L00}; see \cite{MFE15} for discussions.
 In such applications, one needs to simulate from the copula of $\X$, where $\mathbf X$ may have non-continuous marginal distributions (e.g., losses from default events).
 Assume that we can simulate $\X$, and we also have knowledge of all marginal distributions of $\X$. How can we find a reasonable copula $C\in \mathcal C_{\mathbf X}$ to simulate from, that is determined only by $\X$ but not by any particular modeling choices (such as the Gaussian copula)?
 
 \item \textbf{Stressing the distribution of $\X$.}  In sensitivity analysis and risk management, it is often necessary to stress, or distort, the distribution of $\mathbf X$ to obtain post-stress distributions. In the stressing mechanisms studied by \cite{MTW21}, 
 one needs to find a stressed probability measure $Q_1$ by using
 $\d Q_1/\d \p=g(U_1)$ for a non-negative increasing function $g$ with $\int _0^1 g(u) \d u=1$, such as $g(u)=2u$. 
  The simple interpretation of $Q_1$  is to gradually increase the weight of realizations $\omega\in \Omega$ at which $X_1$ is large. Similarly,  one can simultaneously stress all components of $\X$ by considering a measure $Q$ such that $\d Q/\d \p= (1/d) \sum_{i=1}^d g_i(U_i)$ or $\d Q/\d \p=c \prod_{i=1}^d g_i(U_i)$ with a normalizing constant $c>0$ ($c=1$ if $U_1,\dots,U_d$ are independent), where $g_i$ are non-negative increasing functions with $\int _0^1 g_i(u) \d u=1$.
  If we are only interested in the post-stress distribution $\hat F^{Q_1}_1$ of $X_1$ under $Q_1$, the choice of the copula $C\in \mathcal C_{\mathbf X}$ is irrelevant. 
  However, the choice of the copula $C\in \mathcal C_{\mathbf X}$  matters for the distribution $\hat F^Q_i$ of $X_i$ under $Q$, as well as for the distribution $\hat F^{Q_1}_i$ of $X_i$ under $Q_1$. 
  
\item \textbf{Computing a co-risk measure.} 
Co-risk measures (e.g., \cite{AB16})
are calculated for the conditional distribution of a random variable $X_2$ given some event related to $X_1$. 
A classic example is the Marginal Expected Shortfall (Marginal ES) at level $p\in(0,1)$,
which is defined as, assuming that $X_1$ is continuously distributed,
\begin{equation}\label{equ:MES}
\rho(X_2|X_1):=\E[X_2|X_1>F_1^{-1}(p)]=\E[X_2|U_1>p].
\end{equation}
Generally,  $\rho$ is the mean of $X_2$ given a (not necessarily unique) $p$-tail event of $X_1 $ in the sense of \cite{WZ21}. 
 This risk measure $\rho$ does not depend on the choice of $C\in \mathcal C_{\mathbf X}$ if $X_1$ is continuously distributed ($p$-tail event is unique a.s.); however, it may depend on  $C\in \mathcal C_{\mathbf X}$  if $X_1$ has some points of mass.
 Other co-risk measures, such as CoVaR \citep{AB16}, also face the same issue.
 \item \textbf{Maintaining dependence measures.}
Kendall's $\tau$ is a dependence measure defined based on concordance. For a bivariate random vector $(X,Y)$, its Kendall's $\tau$ is defined as 
$$\tau(X, Y)=\p\left((X_1-X_2)(Y_1- Y_2)>0\right)-\p\left((X_1-X_2)(Y_1-Y_2)<0\right),$$
where $(X_1, Y_1)$ and $(X_2, Y_2)$ are two independent copies of $(X,Y)$. If $(X, Y)$ has continuous marginals, we have further
\begin{equation}\label{eq:tau}\tau(X, Y)=4\int_{[0,1]^2} C(\mathbf u)\d C(\mathbf u)-1,
\end{equation}
where $C$ is the copula for $(X, Y)$. 
If $(X,Y)$ has non-continuous marginals, \eqref{eq:tau} does not hold for every $C\in \mathcal C_{(X,Y)}$.
It is natural to ask which copula $C\in \mathcal{C}_{(X, Y)}$ conveys the property in the case of continuous marginals. We can also consider similar applications for other concordance-based dependence measures such as Spearman’s $\rho$.
 \end{enumerate}

All of the above contexts point to the question of choosing a good copula $C\in \mathcal C_{\mathbf X}$.
{\cite{N04} discusses similar applications for the choice of copulas. The main idea of \cite{N04} is to extend the subcopula, which is the part of the copula that uniquely determined by joint distribution, to capture the dependence of the original random vector analogous to the case with continuous marginals. 
In this paper, we address this problem from the view of probability integral transformation.}  
We first offer a new characterization of all copulas of a given random vector {by constructing all probability integral transformations} in Section \ref{sec:2} in Theorem \ref{lem:1}.
In Section \ref{sec:3}, we give some intuitive and heuristic arguments for the questions above, leading to
the proposal of using the checkerboard copula, that is, the unique copula of $\mathbf X$ that is as uniform as possible in regions where the copulas of $\mathbf X$ are not uniquely determined, formally defined in Definition \ref{def:0}. {The checkerboard copula is the same as the standard extension proposed by \cite{N04}, which has been shown to preserve quadrant dependence, tail dependence, and weak convergence results of the joint distribution.}
Although the arguments in Section \ref{sec:3} are heuristic, the use of the checkerboard copula indeed has a theoretical justification, which we present in Section \ref{sec:ent}. The checkerboard copula has the maximum Shannon entropy among all possible copulas of $\mathbf X$, as shown in Theorem \ref{th:entropy}.
In Section \ref{sec:5}, we show in Theorem \ref{th:gen} that the checkerboard copula preserves various dependence concepts that are satisfied by $\mathbf X$. This result is intuitive, but the proof requires serious technical analysis. 
We discuss two applications of our results in diversification penalty and induced order statistics in Section \ref{sec:app}. Section \ref{sec:application_corisk} uses numerical and empirical experiments to demonstrate that the checkerboard copula is a convenient and natural choice that can produce reliable results. Section \ref{sec:conclusion} concludes the paper. 

 \section{Copulas for a discrete random vector}\label{sec:2}

Let $d\ge 2$ be an integer and $[d]=\{1,\dots,d\}$.
All inequalities are
interpreted component-wise when applied to vectors.
All random variables live in an atomless probability space $(\Omega,\mathcal F,\p)$. Let $\X=(X_1,\dots,X_d)$ be a $d$-dimensional random vector, $F_1,\dots,F_d$ be the marginal distributions of $\X$, and $\mathrm{Ran}(F_i)$ be the range of $F_i$ for $i \in [d]$. By Sklar's theorem, the copula of $\X$ is uniquely determined on $\mathrm{Ran}(F_1) \times \dots \times \mathrm{Ran}(F_d)$ but undetermined in other regions. Therefore, when the marginal distribution $F_i$ is not continuous for some $i \in [d]$, the copula of $\X$ may not be unique. In this section, we give a concrete representation for any copulas of $\X$. 
 
We start with the observation that, if a random variable $X$ is continuously distributed, the random variable 
 $$
 U_X:=F_X(X) 
 $$
 will be uniformly distributed over $[0,1]$, where $F_X$ is the cumulative distribution function of $X$. More generally, regardless of whether $X$ is continuously distributed, we can define its probability integral transform 
 \begin{align}\label{eq:u1}
 U_X:=F_X(X-) + V_X(F_X(X)- F_X(X-)) ,
 \end{align}
 where $F_X(x-)=\lim_{y\uparrow x} F_X(x)=\p(X<x)$ for $x\in \R$ and $V_X\sim \mathrm{U}[0,1]$ is independent of $X$, assumed to exist.\footnote{This assumption is safe as we are interested in distributional properties, and we can extend the probability space to include such independent $V_X$, if necessary.} 
The  probability integral transform  $U_X$     satisfies $U_X\sim \mathrm{U}[0,1]$ and $F_X^{-1} (U_X)=X$ a.s.~(see e.g., \cite[Proposition 1.3]{R13}). Therefore, the probability integral transform \eqref{eq:u1} converts any random variable $X$ to a $\mathrm{U}[0,1]$ distributed random variable $U_X$ using $V_X$. 

We extend this idea to the case of a random vector $\X$. Let $\mathbf V=(V_1,\dots,V_d)$ be a random vector with $\mathrm{U}[0,1]$ marginals such that $V_i$ is independent of $X_i$ for each $i\in [d]$.  Denote the set of such $\mathbf V$ by $\mathcal V_\X$.
Similar to \eqref{eq:u1}, let us define the probability integral transform for $\X=(X_1,\dots,X_d)$:
\begin{align}\label{eq:us}
 U_{i}:=F_{i}(X_i-) + V_i(F_{i}(X_i)- F_i(X_i-)),~~~i\in[d].
 \end{align}
It immediately follows that $U_i\sim  \mathrm{U}[0,1]$ and $F_i^{-1}(U_i)=X_i$ a.s. Therefore, $\mathbf U=(U_1,\dots,U_d)$ is a random vector with uniform marginals.  This transformation comes from randomized hypothesis tests \citep[Section 5.3]{F67} and has been applied in various contexts; see, e.g., \cite{MS75}, \cite{N07}, \cite{R81,R09,R13}, and  \cite{F17}.

Let $C_{\mathbf X}^{\mathbf V}$ be the copula of $\mathbf U$. Because $F_i^{-1}(U_i)=X_i$ a.s.~for each $i \in [d]$, we have 
\begin{align*}
 C_{\mathbf X}^{\mathbf V} (F_1(x_1),\dots,F_d(x_d)) &=\p(U_1\le F_1(x_1),\dots,U_d\le F_d(x_d)) 
 = \p (X_1\le x_1,\dots,X_d\le x_d)
\end{align*}
for any $ (x_1,\dots,x_d)\in  \R^d$. 
Hence, $C_{\mathbf X}^{\mathbf V}$  is a copula of $\X$.

According to \eqref{eq:us}, the copula $C_{\mathbf X}^{\mathbf V}$ is determined by the joint distribution of $(\mathbf X,\mathbf V)$. In particular, the copula $ C_{\mathbf X}^{\mathbf V} $ does not depend on the choice of $V_i$ for $i$ such that $X_i$ is continuously distributed because, for these $i$, $U_i$ in \eqref{eq:us} is a.s.~equal to $F_i(X_i)$. While for $i$ such that $X_i$ is discrete, $V_i$ does have an impact on the copula $ C_{\mathbf X}^{\mathbf V} $.

In general, the choice of $\mathbf V \in \mathcal V_\X$ for constructing the copula $C_{\mathbf X}^{\mathbf V}$ may not be unique. This is because $\mathcal V_\X$ allows two types of dependence that might be present in the construction of  $\mathbf V$: First, the components of $\mathbf V$ may be mutually dependent. Second, $V_i$ may depend on $X_j$ for $i\ne j$. Naturally,  a different choice of $\mathbf V \in \mathcal V_\X$ often leads to a different copula $ C_{\mathbf X}^{\mathbf V} $; see the following example.

 \begin{example}\label{ex:1}
Assume that $d=2$, $X_1$ is a constant, and $X_2$ is continuously distributed.  It is well known that any copula is a copula of $\X$ in this case. For instance, by choosing $V_1$ to be independent of $X_2$, $C_{\X}^{\mathbf V}$ is the independence copula, and by choosing $V_1=F_2(X_2)$, $C_{\X}^{\mathbf V}$ is the comonotonic copula.
 \end{example}

The following result says that all copulas of $X$ can be realized by some $C_{\X}^{\mathbf V}$. Hence, \eqref{eq:us} gives a stochastic representation for any copula of $\X$. The representation is quite intuitive, but we did not find it in the literature, so we provide a self-contained proof. 
 \begin{theorem}\label{lem:1}
 Let $\X$ be a random vector 
 such that there exists a continuously distributed random variable independent of  $\X$.
 A copula $C$ is a copula of $\X$ if and only if $C=C_{\X}^{\mathbf V}$ for some $\mathbf V \in \mathcal V_\X$.
 \end{theorem}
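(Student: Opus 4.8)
\textit{Plan.} The statement is an equivalence, and one direction is immediate: if $\mathbf V\in\mathcal V_\X$ and $\mathbf U=(U_1,\dots,U_d)$ is defined by \eqref{eq:us}, then the computation displayed just before the theorem already shows $C_\X^{\mathbf V}(F_1(x_1),\dots,F_d(x_d))=\p(X_1\le x_1,\dots,X_d\le x_d)$ for every $\mathbf x\in\R^d$, so $C_\X^{\mathbf V}$ is a copula of $\X$ with nothing more to prove. The real content is the converse: given an arbitrary copula $C$ of $\X$, one must produce $\mathbf V\in\mathcal V_\X$ with $C_\X^{\mathbf V}=C$. The idea is to first couple $\X$ with a $C$-distributed vector of uniforms and then \emph{invert} the transform \eqref{eq:us} to read off $\mathbf V$, filling in the coordinates that \eqref{eq:us} leaves undetermined with fresh independent randomness.

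\textit{Step 1 (coupling).} Let $\mathbf U^{\circ}\sim C$ and set $X_i^{\circ}=F_i^{-1}(U_i^{\circ})$. Using the elementary equivalence $F_i^{-1}(u)\le x\iff u\le F_i(x)$ one gets $\p(\mathbf X^{\circ}\le\mathbf x)=C(F_1(x_1),\dots,F_d(x_d))=\p(\X\le\mathbf x)$, so $\mathbf X^{\circ}\laweq\X$. By a transfer argument --- this is the one place where the hypothesis that $\X$ admits an independent continuously distributed random variable is used, namely to guarantee that the space is rich enough --- we may assume that on (an extension of) $(\Omega,\mathcal F,\p)$ there exists $\mathbf U=(U_1,\dots,U_d)\sim C$ with $F_i^{-1}(U_i)=X_i$ a.s.\ for all $i\in[d]$, together with i.i.d.\ random variables $W_1,\dots,W_d\sim\mathrm U[0,1]$ independent of $(\X,\mathbf U)$.

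\textit{Step 2 (defining and checking $\mathbf V$).} Put
\[
V_i:=\frac{U_i-F_i(X_i-)}{F_i(X_i)-F_i(X_i-)}\,\id_{\{F_i(X_i)>F_i(X_i-)\}}+W_i\,\id_{\{F_i(X_i)=F_i(X_i-)\}},\qquad i\in[d].
\]
First, $F_i(X_i-)\le U_i\le F_i(X_i)$ a.s.: the upper bound is $F_i^{-1}(U_i)=X_i\Rightarrow U_i\le F_i(X_i)$, while $F_i^{-1}(U_i)=X_i$ also forces $U_i>F_i(x')$ for every $x'<X_i$, whence $U_i\ge F_i(X_i-)$; in particular $V_i\in[0,1]$. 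Next, $\mathbf V\in\mathcal V_\X$: it suffices that the conditional law of $V_i$ given $X_i$ be $\mathrm U[0,1]$. On $\{F_i(X_i)=F_i(X_i-)\}$ this holds because there $V_i=W_i$, which is independent of $(\X,\mathbf U)$; and for an atom $x$ of $F_i$ the same quantile equivalence gives $\{X_i=x\}=\{U_i\in(F_i(x-),F_i(x)]\}$ a.s., so conditioning the uniform $U_i$ on this subinterval makes $V_i$ uniform on $[0,1]$. Hence $V_i\sim\mathrm U[0,1]$ and $V_i$ is independent of $X_i$ for each $i$.

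\textit{Step 3 (identification) and main obstacle.} Substituting this $\mathbf V$ into \eqref{eq:us}: on $\{F_i(X_i)>F_i(X_i-)\}$ the expression telescopes to $U_i$, and on $\{F_i(X_i)=F_i(X_i-)\}$ it equals $F_i(X_i-)=F_i(X_i)$, which equals $U_i$ by the squeeze from Step 2; so the probability integral transform built from $\mathbf V$ equals $\mathbf U$ a.s., and therefore $C_\X^{\mathbf V}$, being the copula of that vector, equals $C$. The two delicate points are (i) transporting the coupling of Step 1 onto the original (extended) probability space, which is exactly what the independent-continuous-variable hypothesis is for, and (ii) the conditional-uniformity check in Step 2, which hinges on identifying $\{X_i=x\}$ for an atom $x$ with $\{U_i\in(F_i(x-),F_i(x)]\}$; everything else --- the bounds on $U_i$, the telescoping, measurability of $x\mapsto F_i(x-)$, and discarding the null event $\{U_i\in\{0,1\}\}$ --- is routine.
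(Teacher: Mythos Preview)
Your proposal is correct and follows essentially the same route as the paper's proof: couple $\X$ with a $C$-distributed vector $\mathbf U$ via the transfer/representation step, then invert \eqref{eq:us} to define $\mathbf V$, filling in with auxiliary uniforms on the event $\{F_i(X_i)=F_i(X_i-)\}$. The only cosmetic difference is that the paper uses a single auxiliary uniform $V'$ shared across all coordinates, whereas you use $d$ independent copies $W_1,\dots,W_d$; since $\mathcal V_\X$ imposes no joint restriction on the $V_i$'s, both constructions work equally well.
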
  
 
 \begin{proof}
 We have seen that $C_{\X}^{\mathbf V}$ is a copula of $\X$. It suffices to show the ``only if" statement. 
 Let $C$ be a copula of $\X$,   take $\mathbf U'=(U_1',\dots,U_d')\sim C$, and 
 write $\mathbf X'= \left(F^{-1}_1(U_1'),\dots,F^{-1}_d(U_d')\right)$. 
  Because $C$ is a copula of $\X$, for $\mathbf x =(x_1,\dots,x_d)\in \R^d$, we have
\begin{align*}
 \p(\X\le \mathbf x)  =  C\left(F_1(x_1),\dots,F_d(x_d)\right) &= \p\left(U'_1\le F_1(x_1),\dots,U'_d\le F_d(x_d)\right)  \\&  = \p\left(F^{-1}_1(U_1')\le x_1,\dots,F^{-1}_d(U_d')\le x_d\right) =\p(\X'\le \mathbf x).
 \end{align*}
 Hence, $\mathbf X \laweq \X'$. 
Take  $\mathbf U^*=(U_1^*,\dots,U_d^*) $ such that 
$(\X,\mathbf U^*)\laweq (\X',\mathbf U')$, and we then have $\mathbf X = \left(F^{-1}_1(U_1^*),\dots,F^{-1}_d(U_d^*)\right)$ a.s. 
Furthermore, take $  V'\sim \mathrm U[0,1]$ which is independent of $(\X,\mathbf U^*)$. 
The existence of $\mathbf U^*$ and $V'$ 
is guaranteed by the assumption of the existence of a continuously distributed random variable independent of $\X$.
For $i\in[d]$, let $\mathbf V=(V_1,\dots,V_d)$ be given by $$V_i= \frac{U_i^*-F_i(X_i-)}{F_i(X_i)- F_i(X_i-)} \id_{\{F_i(X_i)>F_i(X_i-) \}} +  V' \id_{\{F_i(X_i)=F_i(X_i-) \}}.$$
Fix $i\in[d]$ below. 
Let $D_i$ be the set of   discontinuity points of $F_i$.
Note that for $x\in D_i$, we have $$\p\left( U_i^*\in [F_i(x-), F_i(x) ]  \big|X_i=x\right) =1 \mbox{~~and~~} \p\left( U_i^*\in [F_i(x-), F_i(x) ]  \big|X_i \ne x\right) =0.$$
Because $U_i^*$ is uniformly distributed over $[0,1]$, 
  $U_i^*$ is uniform on $[F_i(x-), F_i(x) ] $  conditional on   $X_i=x\in D_i$. Thus,
\begin{align*}
\p(U_i^* \le u | X_i=x )   =    \frac{u-F_i(x-)}{F_i(x)-F_i(x-)} ,~~~~u\in [F_i^{-1}(x-), F_i^{-1}(x) ].
\end{align*} 
Therefore, for $u\in [0,1]$,   
\begin{align*}
\p(V_i \le u | X_i) & = \p\left(  \frac{U_i^*-F_i(X_i-)}{F_i(X_i)-F_i(X_i-)} \le u \Big| X_i \right) \id_{\{X_i\in D_i \}} + \p(V'\le u)   \id_{\{X_i \not \in D_i\}}
\\& = \p\left(  U_i^* \le  u (F_i(X_i)-F_i(X_i-)) +F_i(X_i-)  \big|X_i\right) \id_{\{X_i\in D_i \}}   +  u  \id_{\{X_i\not \in D_i \}}  
\\& = u  \id_{\{X_i\in D_i \}}   +  u  \id_{\{X_i \not \in D_i \}} 
 = u.
\end{align*}
Hence, $V_i$ follows $\mathrm{U}[0,1]$ and is independent of $X_i$.
Note that, by the construction, $U_i^*$, $V_i$, and $X_i$ satisfy $U_{i}^*=F_{i}(X_i-) + V_i(F_{i}(X_i)- F_i(X_i-))$ a.s., 
and hence $\mathbf U^*\sim C^{\mathbf V}_\X$.   
This shows $C= C^{\mathbf V}_\X$.
 \end{proof}

 Theorem \ref{lem:1} implies $\mathcal C_\X = \{C_\X^{\mathbf V}: \mathbf V\in \mathcal V_\X\}$, providing a characterization of copulas in a stochastic form.
Note that $\mathcal C_\X$ is a singleton if and only if all marginal distributions of $\X$, $F_1, \dots,F_d$, are continuous functions. {The ``if'' direction of Theorem \ref{lem:1} in the case $d=2$ is shown by \citet[Proposition 4]{N07}. The characterization of copulas in an analytical form is provided by \cite{DCDS17}.}

\section{Motivating arguments for the checkerboard copula}
\label{sec:3}

Theorem \ref{lem:1} gives the entire class of copulas for $\X$. 
 We now consider which $\mathbf V \in \mathcal V_\X$ can answer the {four} motivating questions in Section \ref{sec:1}, which all point to the same unique choice of $\mathbf V\in \mathcal V_\X$.

  \begin{enumerate}
 
 \item \textbf{Simulating from the copula of $\X$.}  A natural approach to simulating from the copula of $\X$ with some atoms in the marginal distributions is by first simulating a pair of $(\X,\mathbf V)$, and then applying the probability integral transform using \eqref{eq:us}.
 Theorem \ref{lem:1} shows that all copulas of $\mathbf X$ can be simulated this way. 
 For this purpose, the simplest and most natural choice of $\mathbf V$ is $\mathbf V\sim \mathrm{U}\left([0,1]^d\right)$ which is independent of $\mathbf X$. In fact, we could not think of an argument against the use of this particular $\mathbf V$ in the context of simulation. 
 
 \item \textbf{Stressing the distribution of $\X$.} 
 To understand how the choice of $\mathbf V$ affects the stressed distribution of $X_2$, we look at the simple example  
 in Example \ref{ex:1} with $g(u)=2u$. Choosing $V_1$ independent of $X_2$ would lead to $\hat F^{Q_1}_2=F_2$, whereas choosing $V_1=F_2(X_2)$ would lead to $\hat F^{Q_1}_2=(F_2)^2$.
 Because we are interested in the effect of stressing $X_1$ on $X_2$, and $X_1$ is a constant in this example, it is natural to choose a $V_1$ that affects the distribution of $X_2$ minimally, which is achieved when $V_1$ is independent of $X_2$.
  Translating this argument into the general $d$-dimensional setting suggests choosing $\mathbf V\sim \mathrm{U}\left([0,1]^d\right)$ independent of $\mathbf X$.
  
\item \textbf{Computing a co-risk measure.} 
 To understand how the choice of $\mathbf V$ affects the value of the co-risk measure, we again look at Example \ref{ex:1}.  We have $\rho(X_2|X_1)=\E[X_2]$ if $V_1$ is independent of $X_2$, and $\rho(X_2|X_1)=\ES_p(X_2)$ if $V_1=F_2(X_2)$, where $\ES_p(X_2)=\E[X_2|U_2>p]$ is the Expected Shortfall of $X_2$ at level $p$.
 The interpretation of $\rho$ as the mean of $X_2$ on a tail event of $X_1$  suggests that it is natural to choose $V_1$ independent of $X_2$, because $X_1$ is a constant and its tail event should not affect $X_2$.

{ \item \textbf{Maintaining dependence measures.} It has been shown in  \cite{DL05}, \cite{N07}, and \cite{GN07} that \eqref{eq:tau} holds for the copula introduced by $\mathbf V \sim \mathrm{U}([0,1]^2)$. 
This property leads to extensions of dependence measures in the multivariate case based on this copula; see, e.g., \cite{MQ10} and \cite{GNR13}.
}

 \end{enumerate} 
 
In all the considerations above, $\mathbf V\sim \mathrm{U}\left([0,1]^d\right)$ independent of $\mathbf X$ appears to be a good choice.
 Let us denote this by $\mathbf V^{\perp}_\X$ and the corresponding copula by $C_{\X}^{\perp}$, where $\perp$ reflects that independence is used twice to construct $\mathbf V$ (within components of $\mathbf V$ and between $\mathbf V$ and $\X$).  
From the four motivating examples above,  the choice of the particular copula $C_{\X}^{\perp}$ is natural and has several unique features. 
This choice has been known as the checkerboard copula. 

\begin{definition}\label{def:0}
The copula $C_{\X}^{\perp}$ is called the \emph{checkerboard copula} of $\X$.
\end{definition}

The copula $C_{\X}^{\perp}$ is also called  the multilinear extension copula of $\X$; see \cite{GNR17} for its properties, its empirical process, and a history. One notable property is that $X_1,\dots,X_d$ are independent if and only if $C_\X^{\perp}$ is the independence copula. 

The rest of the paper focuses on the properties and applications of the checkerboard copula.

\section{Entropy maximization}
\label{sec:ent}
Given the natural choice of $C_{\X}^{\perp}$ in the applications in Section \ref{sec:3}, it should have some unique properties within the class $\mathcal C_\X$. The applications seem to suggest that $C_{\X}^{\perp}$ relies less on external information compared to other choices of $\mathbf V$.
 Such consideration is typically studied via entropy. 
Indeed, as argued by \cite{J57}, the maximum-entropy distribution should be the only unbiased choice given available information. 
If a copula $C $ has a density function $c$, then its Shannon (differential) entropy is defined as 
$$
H(C)=-\int_{[0,1]^d} c(\mathbf u) \log  c(\mathbf u)  \d \mathbf u.
$$
One problem with the above formulation is that a copula $C$ often does not have a density. 
We set $H(C)=- \infty$ if $C$ does not have a density, which is intuitive and can be seen as a limiting case; 
{see \cite{KPRH16} for a discussion on the definition of entropy for singular distributions. 
\begin{remark}
By definition, $H(C)=-D_{\rm KL}(P_C \Vert P_L)$, where $D_{\rm KL}(P_C \Vert P_L)$ is the KL divergence between the probability measure $P_C$ with distribution function $C$ and the Lebesgue measure $P_L$ on $[0,1]^d$. 
Since the KL divergence quantifies the similarity between $P_C$ and $P_L$ via the likelihood ratio $\d P_C/\d P_L$, 
being singular is the extreme form of non-similarity in terms of likelihood ratio. 
Therefore, it is natural to set $D_{\rm KL}(P_C \Vert P_L) $ as  $\infty$  whenever $P_C$ is not absolutely continuous with respect to $P_L$.
This is a standard approach in the literature on KL divergence and differential entropy; 
see e.g., \cite{C75} and \cite{CT91}. Hence, to keep the same intuition and consistency with the KL divergence, we set $H(C)=-\infty$ when $C$ does not have density. 
\end{remark}}
However, even the checkerboard copula $C^{\perp}_{\X}$ may not have a density if the distribution of $\X$ has some singular continuous part. This issue may be solved by considering other measures of information, but for now, let us stick to the Shannon entropy, which is the most popular notion in information theory. We would like to compare $H( C^{\perp}_{\X})$ with $H(C)$ for $C \in\mathcal C_{\mathbf X} $, 
or equivalently, $H(C^{\mathbf V}_{\X})$ for other choices of $\mathbf V\in \mathcal V_\X$. 
The main result of this section is to show that $H(C_{\mathbf X}^\perp)$ has the largest entropy among all other choices.

\begin{theorem}
\label{th:entropy}
For   $C\in \mathcal C_\X$, 
 we have $H(C^{\perp}_\X) \ge H(C )$.
\end{theorem}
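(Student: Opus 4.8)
The plan is to reduce the statement to a conditional-entropy computation using the stochastic representation from Theorem \ref{lem:1}. Fix an arbitrary $C\in\mathcal C_\X$; by Theorem \ref{lem:1} we may write $C=C_\X^{\mathbf V}$ for some $\mathbf V\in\mathcal V_\X$, so $\mathbf U=(U_1,\dots,U_d)$ given by \eqref{eq:us} has distribution $C$, and $\mathbf U^{\perp}=(U_1^{\perp},\dots,U_d^{\perp})$ built from $\mathbf V^{\perp}_\X$ has distribution $C_\X^{\perp}$. If $C$ has no density there is nothing to prove, so assume $C$ is absolutely continuous. The key structural observation is that each $U_i$ in \eqref{eq:us} is an affine function of $V_i$ on the random interval $[F_i(X_i-),F_i(X_i)]$ determined by $X_i$, and that this map is the \emph{same} for both $\mathbf V$ and $\mathbf V^{\perp}_\X$; the only difference is the joint law of the ``fill-in'' variables. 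So I would condition on $\X$ (equivalently, on the cell of the checkerboard partition, together with the singular-continuous part of $\X$), and compare the conditional laws of $\mathbf U$ and $\mathbf U^{\perp}$ inside each cell: $\mathbf U^{\perp}$ is conditionally uniform on the cell, whereas $\mathbf U$ has some conditional law on the same cell.

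Concretely, I would introduce the discretized vector $\mathbf Y=(Y_1,\dots,Y_d)$ where $Y_i$ records the interval $[F_i(X_i-),F_i(X_i)]$ that $U_i$ falls into (this is a deterministic function of both $\mathbf U$ and $\mathbf U^{\perp}$, with the same law under either copula, namely the law induced by $\X$ through the marginal jumps). Writing the joint ``density'' of $\mathbf U$ relative to Lebesgue as a mixture over the values of $\mathbf Y$, the entropy decomposes as $H(C)=H(\mathbf Y)+\E[\,H(\mathbf U\mid \mathbf Y)\,]$ in the appropriate sense, where the first term is a discrete entropy common to both copulas and the second is an average of differential entropies of the conditional laws on the cells. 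For $C_\X^{\perp}$ the conditional law on each cell is exactly uniform, which maximizes differential entropy on that bounded rectangle; hence $\E[H(\mathbf U^{\perp}\mid\mathbf Y)]\ge\E[H(\mathbf U\mid\mathbf Y)]$ term by term, and since the $\mathbf Y$-marginals coincide we get $H(C_\X^{\perp})\ge H(C)$. When $\X$ has a singular-continuous part, the corresponding cells are degenerate (Lebesgue-null), the conditional differential entropy there is $-\infty$ for \emph{any} copula including the checkerboard one, so both sides are $-\infty$ on that part and the inequality holds trivially (consistent with the convention $H=-\infty$ for non-absolutely-continuous copulas); care is only needed to make the conditioning/decomposition rigorous in the presence of this part.

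I expect the main obstacle to be making the ``condition on the cell and decompose the entropy'' step fully rigorous, rather than any single inequality. Specifically: (i) one must justify the chain-rule decomposition $H(C)=H(\mathbf Y)+\E[H(\mathbf U\mid \mathbf Y)]$ for differential entropy with a mixed discrete/continuous conditioning variable, including handling cells of Lebesgue measure zero (atoms in several coordinates, or the singular-continuous part) where conditional densities do not exist; (ii) one must argue that $C$ having a density forces, for $\mathbf Y$-almost every cell of positive volume, a genuine conditional density on that cell, so that Jensen/Gibbs' inequality $-\int_{R} f\log f\,\le\, \log|R|$ (with equality for the uniform law, $f\equiv 1/|R|$) can be applied cell by cell; and (iii) one must verify that the discrete part $H(\mathbf Y)$ really is identical for $C$ and $C_\X^{\perp}$ — this is immediate because $\mathbf Y$ is a function of $\X$ alone and both copulas are copulas of $\X$. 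Once these measure-theoretic bookkeeping points are settled, the inequality itself is just the classical fact that, among distributions supported on a fixed bounded set, the uniform distribution uniquely maximizes Shannon entropy, applied inside each checkerboard cell and averaged.
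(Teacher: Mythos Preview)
Your core idea is exactly the one the paper uses: the checkerboard density is the cell-wise average of any other compatible density, and convexity of $x\log x$ then forces the entropy inequality. Where the two routes diverge is in how that convexity is applied. The paper does \emph{not} decompose the entropy. Instead it proves a single lemma: if $C\in\mathcal C_\X$ has density $c$, then $C_\X^\perp$ has density $c^\perp$ and, for $\mathbf U\sim\mathrm U([0,1]^d)$ and $\hat\X=(F_1^{-1}(U_1),\dots,F_d^{-1}(U_d))$, one has $c^\perp(\mathbf U)=\E[c(\mathbf U)\mid\hat\X]$. With $g(x)=x\log x$ and this identity, one line of Jensen gives $-\E[g(c^\perp(\mathbf U))]\ge -\E[g(c(\mathbf U))]$, i.e.\ $H(C_\X^\perp)\ge H(C)$. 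No chain rule, no splitting into discrete and differential pieces.

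Your proposed chain rule $H(C)=H(\mathbf Y)+\E[H(\mathbf U\mid\mathbf Y)]$ is the step that does not go through as written. It is valid precisely when $\mathbf Y$ induces a \emph{countable} measurable partition of $[0,1]^d$, i.e.\ when every marginal of $\X$ is purely atomic. As soon as some $F_i$ has a continuous part, $Y_i$ records a degenerate interval $\{U_i\}$ and is itself continuous; then ``the discrete entropy $H(\mathbf Y)$'' is not defined, the cells no longer form a countable partition, and the split becomes a formal $+\infty+(-\infty)$. Your remark that degenerate cells ``contribute $-\infty$ to both sides'' does not rescue the decomposition, because the problem is upstream: the sum $H(\mathbf Y)+\E[H(\mathbf U\mid\mathbf Y)]$ is not a legitimate representation of $H(C)$ in this regime. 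Even in the purely atomic case with infinitely many atoms, one must check that the two series are not separately divergent with opposite signs. The paper's conditional-expectation formulation sidesteps all of this: it is exactly the ``rigorous version'' of your conditioning idea, applied to the density rather than to the entropy, so that Jensen can be invoked once globally instead of cell by cell.
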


The proof of Theorem \ref{th:entropy} essentially boils down to showing the following lemma, which states that the density of the checkerboard copula can be expressed as the conditional expectation for the density of other possible copulas in $\mathcal{C}_{\X}$.
From this lemma and Jensen's inequality, Theorem \ref{th:entropy} follows. 

\begin{lemma}\label{lem:condition}
For $C\in \mathcal{C}_\X$, if the density $c$ of $C$ exists, then the density $c^\perp$ of $C^\perp_\X$ exists. Moreover, $c^{\perp}(\mathbf U)=\E[c(\mathbf U)|\hat \X]$, where  $\mathbf U =(U_1, \dots, U_d)\sim \mathrm{U}\left([0,1]^d\right)$, $\mathbf {\hat \X}=\left(F_1^{-1}(U_1), \dots, F_d^{-1}( U_d)\right)$, and  $F_1, \dots, F_d$ are the  marginals of $\X.$
\end{lemma}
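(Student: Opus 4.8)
### Proof proposal for Lemma \ref{lem:condition}

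The plan is to make the conditioning structure explicit via the probability integral transform construction of Section \ref{sec:2}, and then compute the density of $C^\perp_\X$ by an elementary change-of-variables-and-averaging argument. Let $C\in\mathcal C_\X$ with density $c$, and by Theorem \ref{lem:1} realize $C = C^{\mathbf V}_\X$ for some $\mathbf V\in\mathcal V_\X$. Fix independently a vector $\mathbf W=(W_1,\dots,W_d)\sim\mathrm U([0,1]^d)$ that is independent of $(\X,\mathbf V)$ (legitimate by the atomless assumption), and set $U_i^\perp = F_i(X_i-) + W_i(F_i(X_i)-F_i(X_i-))$. Then $\mathbf U^\perp\sim C^\perp_\X$ by Definition \ref{def:0}, while the original $\mathbf U = (F_i(X_i-)+V_i(F_i(X_i)-F_i(X_i-)))_{i\in[d]}\sim C$. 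The key observation is that both $\mathbf U$ and $\mathbf U^\perp$ are built on the \emph{same} $\X$, and within each atom-interval of each coordinate, $\mathbf U^\perp$ spreads mass uniformly and independently across coordinates, whereas $\mathbf U$ may not.

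First I would reduce to coordinates with atoms. Write $[0,1]^d = \bigsqcup_{A} R_A$ (up to a null set) where $A\subseteq[d]$ indexes which coordinates land in an atom of $F_i$, and on the complementary coordinates $i\notin A$ one has $U_i = U_i^\perp = F_i(X_i)$ a.s.; so it suffices to prove the identity conditionally on the configuration of atoms and then treat each such region. Concretely, I would condition on $\hat\X = (F_1^{-1}(U_1),\dots,F_d^{-1}(U_d))$: note that $\hat\X = \X$ a.s. when we start from $\mathbf U\sim C$ (since $F_i^{-1}(U_i)=X_i$ a.s.), so conditioning on $\hat\X$ is the same as conditioning on $\X$. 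Given $\X = \mathbf x$, the conditional law of $\mathbf U$ is supported on the box $B(\mathbf x) = \bigtimes_{i=1}^d [F_i(x_i-), F_i(x_i)]$ (a box of Lebesgue volume $\prod_i (F_i(x_i)-F_i(x_i-))$, with degenerate coordinates where $F_i$ is continuous at $x_i$), while the conditional law of $\mathbf U^\perp$ is exactly the \emph{uniform} distribution on $B(\mathbf x)$ because $\mathbf W$ is independent of $\X$ with uniform coordinates.

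Next I would write the densities against Lebesgue measure. For $\mathbf u$ in a region where all relevant $F_i$ have a genuine atom at $x_i = F_i^{-1}(u_i)$, disintegration gives
\begin{align*}
c(\mathbf u) &= f_{\X}(\mathbf x)\, \cdot\, c_{\mathbf U\mid\X=\mathbf x}(\mathbf u), \qquad
c^\perp(\mathbf u) = f_{\X}(\mathbf x)\,\cdot\, \frac{\id_{\{\mathbf u\in B(\mathbf x)\}}}{\prod_i (F_i(x_i)-F_i(x_i-))},
\end{align*}
where $f_\X$ is the (discrete part's) mass function and the conditional density $c_{\mathbf U\mid\X=\mathbf x}$ integrates to $1$ over $B(\mathbf x)$. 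The point is that $c^\perp(\mathbf u)$ depends on $\mathbf u$ only through which box $B(\mathbf x)$ it lies in, i.e. only through $\hat\X(\mathbf u) := (F_i^{-1}(u_i))_i$, and its value there equals the average of $c$ over that box:
$$
c^\perp(\mathbf u) = \frac{1}{\mathrm{vol}(B(\mathbf x))}\int_{B(\mathbf x)} c(\mathbf v)\,\d\mathbf v = \E\big[c(\mathbf U)\,\big|\, \hat\X = \mathbf x\big],
$$
since, for $\mathbf U\sim C$, the conditional law of $\mathbf U$ given $\hat\X=\mathbf x$ has density $c(\cdot)/c^\perp(\mathbf x)$ restricted to $B(\mathbf x)$ — wait, more cleanly: $\E[c(\mathbf U)\mid\hat\X=\mathbf x] = \int_{B(\mathbf x)} c(\mathbf v)\, \mu_{\mathbf U\mid\hat\X=\mathbf x}(\d\mathbf v)$ and one checks directly that $\mu_{\mathbf U\mid\hat\X=\mathbf x}$ has $\mu$-density proportional to $c$. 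The existence of $c^\perp$ falls out of this formula: the displayed expression defines a bona fide density (nonnegative, integrates to $1$ by the tower property $\E[c^\perp(\mathbf U)] = \E[\E[c(\mathbf U)\mid\hat\X]] = \E[c(\mathbf U)] = 1$ once one knows $\mathbf U^\perp$ is absolutely continuous, or directly by Fubini over the boxes), so $C^\perp_\X$ is absolutely continuous with that density.

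The main obstacle, and the step I would spend the most care on, is the \emph{bookkeeping of the singular/continuous decomposition}: $\X$ need not be purely discrete, so $F_i$ may have continuous parts (even singular continuous parts), and the "boxes" $B(\mathbf x)$ degenerate in the continuous coordinates. The clean way around this is to avoid talking about $f_\X$ directly and instead argue measure-theoretically: show that for $\mathbf U\sim C$ the conditional distribution of $\mathbf U$ given $\hat\X$ is a.s. supported on $B(\hat\X)$ and has conditional density (w.r.t. normalized Lebesgue measure on $B(\hat\X)$, interpreted as the appropriate lower-dimensional measure when the box is degenerate) equal to $c$ renormalized; then $c^\perp$, being the density of $\mathbf U^\perp$ whose conditional law given $\hat\X$ is uniform on $B(\hat\X)$, must be $\hat\X$-measurable and equal to the conditional average of $c$. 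One must also verify that the event "some $F_i$ is continuous at $X_i$ but $\mathbf u$ still needs averaging" contributes nothing — there $U_i=U_i^\perp$ a.s. so the conditional expectation acts as the identity in that coordinate, consistent with the formula. A short lemma isolating the one-dimensional statement (for a single $X$ with atoms, the checkerboard density is the box-average of $c$) followed by a product/Fubini argument across coordinates keeps the degeneracy issues localized and makes the $d$-dimensional claim routine.
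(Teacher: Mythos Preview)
Your core identification is right and matches the paper: $c^\perp$ is constant on each box $B(\mathbf x)=\bigtimes_i[F_i(x_i-),F_i(x_i)]$ and equals the box-average of $c$ there, which is exactly $\E[c(\mathbf U)\mid\hat\X]$ for $\mathbf U\sim\mathrm U([0,1]^d)$. But there is a genuine slip in the justification. In the lemma, $\mathbf U$ is \emph{uniform}, not distributed as $C$; your displayed equality
\[
\frac{1}{\mathrm{vol}(B(\mathbf x))}\int_{B(\mathbf x)} c(\mathbf v)\,\d\mathbf v \;=\; \E\big[c(\mathbf U)\,\big|\,\hat\X=\mathbf x\big]
\]
holds precisely because, for uniform $\mathbf U$, the conditional law given $\hat\X=\mathbf x$ is uniform on $B(\mathbf x)$. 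The sentence beginning ``since, for $\mathbf U\sim C$, the conditional law \dots\ has density proportional to $c$'' computes the wrong object: if $\mathbf U\sim C$ then $\E[c(\mathbf U)\mid\hat\X=\mathbf x]=\int_{B(\mathbf x)}c^2\big/\int_{B(\mathbf x)}c$, not the box-average. Drop that clause entirely; the equality is already justified by the uniform conditioning, and your earlier use of $\mathbf U\sim C$ (built from $\X,\mathbf V$) is a different random vector that should not appear in this step.

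On the ``main obstacle'' you flag, you are correct that $B(\mathbf x)$ degenerates in coordinates where $F_i$ is continuous at $x_i$, and a disintegration argument there is delicate. The paper sidesteps this by reversing the logic: rather than deriving $c^\perp$ from a conditioning construction, it \emph{defines} the candidate $c^\perp(\mathbf u):=f\big(F_1^{-1}(u_1),\dots,F_d^{-1}(u_d)\big)$ where $f(\hat\X)=\E[c(\mathbf U)\mid\hat\X]$, and then \emph{verifies} $\int_A c^\perp=\p(\mathbf U^\perp\in A)$ on a generating class of rectangles $A=\bigtimes_i A_i$ with each $A_i$ either $[0,a_i]$, $a_i\in\Ran(F_i)$, or $(s_i,t_i]$ disjoint from $\Ran(F_i)$. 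The first type reduces to $C$ and $C^\perp_\X$ agreeing on $\bigtimes_i\Ran(F_i)$; the second uses that $c^\perp$ is constant across such intervals together with independence of the $V_i$ from $\X$. This verification-on-rectangles argument handles atomic, absolutely continuous, and singular marginal parts uniformly, without ever invoking a mass function $f_\X$ or a lower-dimensional disintegration, and is the clean replacement for the ``product/Fubini argument'' you gesture at but do not carry out.
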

\begin{proof}
Since $\E[c(\mathbf U)| \hat \X]$ is $ \sigma(\hat \X)$-measurable, there exists a function $f: \R^d \to [0,1]$ such that $f(\hat \X)=\E[c(\mathbf U)|\hat \X]$ (in the almost sure sense).
Let $c^\perp$ be a function $[0,1]^d \to [0,1]$ defined as $ c^\perp(\mathbf u)=f\left(F^{-1}_1(u_1), \dots, F^{-1}_d(u_d)\right)$ for any $\mathbf u=(u_1, \dots, u_d) \in [0,1]^d$.
We claim that $c^\perp$ is the density of $C^\perp_{\mathbf X}$. This claim implies that $c^\perp$ exists and $c^{\perp}(\mathbf U) = \E[c(\mathbf U)|\hat \X]$. 

To prove this claim, let $\mathbf U^c \sim C$, $\mathbf U^{\perp}\sim C_{\X}^{\perp}$, and $R=\bigtimes_{i=1}^d \mathrm{Ran}(F_i)$. 
We first show that $\int_A c^\perp (\mathbf u)\d \mathbf u=\p(\mathbf U^\perp \in A)$ for the following two types of the set $A$.

\begin{enumerate}[(i)]
\item Let $A=\bigtimes_{i=1}^d [0,a_i]$ with $\mathbf a=(a_1, \dots, a_d) \in R$. We have $\id_{\{\mathbf U \le \mathbf a\}}=\id_{\left\{\hat \X\le \left(F_1^{-1}(a_1), \dots, F_d^{-1}(a_d) \right) \right\}}$.
Therefore,
\begin{align*}
\int_{A} c^\perp(\mathbf u) \d \mathbf u
&=\int_{\bigtimes_{i=1}^d[0, a_i]} f\left(F_1^{-1}(u_1), \dots, F_d^{-1}(u_d)\right) \d u_1 \cdots \d u_d\\
&=\E\left[f(\hat \X)\id_{\{\mathbf U \le \mathbf a\}}\right]\\
&=\E\left[\E[c(\mathbf U)|\hat \X]\id_{\left\{\hat \X\le \left(F_1^{-1}(a_1), \dots, F_d^{-1}(a_d) \right) \right\}}\right]\\
&=\E\left[c(\mathbf U)\id_{\{ \mathbf U \le  \mathbf a\}}\right]=\p(\mathbf U^c \le  \mathbf a)=\p(\mathbf U^\perp \le   \mathbf a),
\end{align*}
where the last equality holds because
\begin{align*}
\p\left(\mathbf U^c \le \mathbf a\right)=\p\left(\mathbf X \le \left( F_1^{-1}(a_1), \dots, F_d^{-1}(a_d)\right)\right)=\p(\mathbf U^\perp \le  \mathbf a).
\end{align*}
This further implies that $\int_{A} c^\perp(\mathbf u) \d \mathbf u=\p(\mathbf U^\perp\in A)$ for any $A=\bigtimes_{i=1}^d A_i$ such that $A_i \in \{[0,a_i]: a_i \in \mathrm{Ran}(F_i)\} \cup \{(F_i(x_i-), F_i(x_i)]: x_i \text{ is a discontinuity point of }F_i \}$ for $i\in [d]$.
\item 
Let $A=(\bigtimes_{i=1}^k[0, a_i]) \times (\bigtimes_{j=k+1}^d (s_j, t_j])$ with $k \in \{0,1,\dots,d\}$ such that $a_i\in \mathrm{Ran}(F_i)$ for $i\in [k]$ and $(s_j, t_j] \cap \mathrm{Ran}(F_j)=\varnothing$ for $j\in [d]\setminus[k]$. For $j \in [d]\setminus[k]$, denote by $x_j=F_j^{-1}(s_j)$, and thus $(s_j, t_j] \subseteq (F_j(x_j-), F_j(x_j))$. By the definition of $c^\perp$, for fixed $u_i\in [0, a_i]$ and $i\in [k]$, $c^\perp(u_1, \dots, u_k, v_{k+1}, \dots, v_d)$ is a constant for all $(v_{k+1}, \dots, v_d) \in \bigtimes_{j=k+1}^d(F_j(x_j-), F_j(x_j))$. Therefore, let $B=(\bigtimes_{i=1}^k[0, a_i]) \times (\bigtimes_{j=k+1}^d (F_j(x_j-),F_j(x_j)))$, we have
\begin{equation*}
\int_{A} c^\perp(\mathbf u) \d \mathbf u= \left( \prod_{j=k+1}^d\frac{t_j-s_j}{F_j(x_j)-F_j(x_j-)} \right)\int_{B} c^\perp(\mathbf u) \d \mathbf u.
\end{equation*}
Let  $\mathbf V=(V_1, \dots, V_d) \sim \mathrm{U}\left([0,1]^d\right)$ be independent of $\mathbf X$, and for $j\in [d]\setminus [k]$, denote by $s'_j={(s_j-F_j(x_j-))}/{(F_j(x_j)-F_j(x_j-))}$ and $t'_j={(t_j-F_j(x_j-))}/{(F_j(x_j)-F_j(x_j-))}$. Hence,
$$
\prod_{j=k+1}^d\frac{t_j-s_j}{F_j(x_j)-F_j(x_j-)}  = \p\left( V_j \in (s'_j, t'_j] \mbox{ for  all}~j \in[d]\setminus [k] \right).
$$
In addition, by (i), we can get 
\begin{align*}
\int_{B} c^\perp(\mathbf u) \d \mathbf u=\p(\mathbf U^\perp \in B) =\p\left(X_i\le F^{-1}(a_i), ~ X_j=x_j \mbox{ for  all}~ i\in [k],~j \in [d]\setminus[k] \right). 
\end{align*}
Therefore, 
\begin{align*}
 \int_{A} c^\perp(\mathbf u) \d \mathbf u 
&= \p\left( \bigcap_{j \in[d]\setminus [k]}\{V_j \in (s'_j, t'_j]\}  \right)  \p\left(
\bigcap_{i\in [k], j \in [d]\setminus[k]}\{
X_i\le F^{-1}(a_i),~ X_j=x_j\}\right)\\
&=\p\left(\bigcap_{i\in [k], j \in [d]\setminus[k]}\left\{X_i\le F^{-1}(a_i),~ X_j=x_j ,~ V_i\in [0,1],~V_j \in (s'_j, t'_j]\right\}\right)\\
&=\p\left(\mathbf U^\perp \in \left(\bigtimes_{i=1}^k[0, a_i] \right)\times \left(\bigtimes_{j=k+1}^d (s_j, t_j]\right)\right)=\p ( \mathbf U^\perp \in A) .
\end{align*}

By the same argument, we have $\int_{A} c^\perp(\mathbf u) \d \mathbf u=\p(\mathbf U^\perp\in A)$ for any $A=\bigtimes_{i=1}^d A_i$ such that $A_i \in \{[0,a_i]: a_i \in \mathrm{Ran}(F_i)\} \cup \{(s_i, t_i]: (s_i, t_i] \cap \mathrm{Ran}(F_i)=\varnothing\}$ for $i\in [d]$.
\end{enumerate}

For any $\mathbf a=(a_1,\dots,a_d) \in [0,1]^d$, the region $\bigtimes_{i=1}^d [0, a_i]$ can always be represented by an at most countable disjoint union of regions studied in (i) and (ii). Hence,  we can obtain 
$$\int_{\bigtimes_{i=1}^d [0, a_i]} c^\perp(\mathbf u) \d \mathbf u=\p(\mathbf U^\perp \le \mathbf a).$$
This proves our claim that $c^\perp$ is the density of $C^\perp_{\mathbf X}$.
\end{proof}

\begin{proof}[Proof of Theorem \ref{th:entropy}]
If  $H(C )=-\infty$, there is nothing to show.
Hence, it suffices to consider the case that $C$ has a density, which we denote by $c$.  
By Lemma \ref{lem:condition}, we have $c^{\perp}(\mathbf U) = \E[c(\mathbf U)|\hat \X]$  where $c^\perp$ is the density of $C^\perp_\X$,  $\mathbf U =(U_1, \dots, U_d)\sim \mathrm{U}\left([0,1]^d\right)$, and $\mathbf {\hat \X}=\left(F_1^{-1}(U_1), \dots, F_d^{-1}( U_d)\right)$ with  $F_1, \dots, F_d$ as the  marginals of $\X.$
Define a function $g(x)=x\log x$ for $x\in (0,\infty)$. It is clear that $g$ is convex.
By the fact that $\E[c(\mathbf U)|\hat \X]=c^{\perp}(\mathbf U)$ and Jensen's inequality, we have 
\begin{align*}
H(C^{\perp}_\X)=-\E[g(c^\perp(\mathbf{U}))]=-\E[g(\E[c(\mathbf{U})|\hat \X])]\ge -\E[\E[g(c(\mathbf{U}))|\hat \X]]=-\E[g(c(\mathbf{U}))]=H(C ).
\end{align*}
Thus, $H(C^{\perp}_\X) \ge H(C )$ for all $C\in \mathcal C_\X$.
\end{proof}

{
Theorem \ref{th:entropy} demonstrates that the entropy of $C_{\mathbf X}$ cannot be greater than $C^\perp_{\mathbf X}$. This result is related to those of \cite{PHB12} and \cite{KSU20}, and the difference is that, in Theorem \ref{th:entropy}, we fix a joint (possibly discrete) distribution and seek to find the copula consistent with this distribution that maximizes the entropy, which is the checkerboard copula. In contrast,
\cite{PHB12} and \cite{KSU20} do not fix a joint distribution. Instead, they search for a checkerboard copula that maximizes the entropy subject to matching either a correlation coefficient or the distribution of the sum of random variables. Therefore, the problems they address are different from our Theorem \ref{th:entropy}.
}

{
When $C_{\mathbf X}$ and $C_{\mathbf X}^\perp$ contain singular components, by definition, $H(C_{\mathbf X})=H(C^\perp_{\mathbf X})=-\infty$. In this case, the next proposition shows that, the entropy for the absolutely continuous part of $C_{\mathbf X}^\perp$ is still greater than that for the absolutely continuous part of $C_{\mathbf X}$.

\begin{proposition}
Assume $C_{\mathbf X} \in \mathcal{C}_{\mathbf X}$ such that $C_{\mathbf X}=\lambda G_{\rm A}+(1-\lambda) G_{\rm S}$, where $\lambda \in [0,1]$, $G_{\rm A}$ is an absolutely continuous distribution function, and $G_{\rm S}$ is a singular distribution function. Then, there exists an absolutely continuous distribution function $ G^\perp_{\rm A}$ and a  distribution function $ G^\perp_{\rm S}$ such that $C_{\mathbf X}^\perp=\lambda G^\perp_{\rm A}+(1-\lambda)  G^\perp_{\rm S}$ and $H(G^\perp_{\rm A})\ge H( G_{\rm A})$.
\end{proposition}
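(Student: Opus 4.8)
The plan is to run the mixture dynamically and then reuse, essentially verbatim, the conditional-expectation identity behind Theorem \ref{th:entropy}. We may assume $\lambda\in(0,1)$, the cases $\lambda\in\{0,1\}$ being immediate (when $\lambda=1$, $C_{\mathbf X}=G_{\rm A}$ has a density and the claim is Theorem \ref{th:entropy}; when $\lambda=0$, take $G^\perp_{\rm A}=G_{\rm A}$). Enlarging the probability space if needed, take $\mathbf U^c=(U^c_1,\dots,U^c_d)\sim C_{\mathbf X}$ together with a Bernoulli variable $W$ with $\p(W=1)=\lambda$ such that $\mathbf U^c\mid\{W=1\}\sim G_{\rm A}$ and $\mathbf U^c\mid\{W=0\}\sim G_{\rm S}$. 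Put $\mathbf X'=\left(F_1^{-1}(U^c_1),\dots,F_d^{-1}(U^c_d)\right)$, which has the law of $\mathbf X$; let $\mathbf V=(V_1,\dots,V_d)\sim\mathrm{U}\left([0,1]^d\right)$ be independent of $(\mathbf X',W)$; and set $U^\perp_i=F_i(X'_i-)+V_i\left(F_i(X'_i)-F_i(X'_i-)\right)$ for $i\in[d]$. Since $(\mathbf X',\mathbf V)$ has the same joint law as a pair used to build $C^\perp_{\mathbf X}$, we have $\mathbf U^\perp\sim C^\perp_{\mathbf X}$. Define $G^\perp_{\rm A}$ and $G^\perp_{\rm S}$ to be the conditional distributions of $\mathbf U^\perp$ given $\{W=1\}$ and given $\{W=0\}$; these are distribution functions on $[0,1]^d$, and conditioning on $W$ yields $C^\perp_{\mathbf X}=\lambda G^\perp_{\rm A}+(1-\lambda)G^\perp_{\rm S}$. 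It remains to show that $G^\perp_{\rm A}$ is absolutely continuous and that $H(G^\perp_{\rm A})\ge H(G_{\rm A})$.

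For the absolute continuity, I would note that the proof of Lemma \ref{lem:condition} nowhere uses that $c$ is a copula density or that it integrates to one: its only ingredients are the identity $\{\mathbf u\le\mathbf a\}=\{\phi(\mathbf u)\le\phi(\mathbf a)\}$ (modulo a Lebesgue-null set) for $\mathbf a\in R:=\bigtimes_{i=1}^d\mathrm{Ran}(F_i)$, where $\phi(\mathbf u)=\left(F_1^{-1}(u_1),\dots,F_d^{-1}(u_d)\right)$, together with the tower property and the independence of $\mathbf V$. Repeating that argument conditionally on $\{W=1\}$, with $G_{\rm A}$ and its density $g_{\rm A}$ in the roles of $C$ and $c$, shows that $G^\perp_{\rm A}$ admits a density $g^\perp_{\rm A}$ of the form $g^\perp_{\rm A}(\mathbf u)=f_{\rm A}(\phi(\mathbf u))$, where $f_{\rm A}$ is measurable with $f_{\rm A}(\hat{\mathbf X})=\E\!\left[g_{\rm A}(\mathbf U)\mid\hat{\mathbf X}\right]$ for $\mathbf U\sim\mathrm{U}\left([0,1]^d\right)$ and $\hat{\mathbf X}=\phi(\mathbf U)$; equivalently, $g^\perp_{\rm A}$ is a version of $\E_{P_L}\!\left[g_{\rm A}\mid\sigma(\phi)\right]$ on $\left([0,1]^d,P_L\right)$, with $P_L$ the Lebesgue measure.

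Finally, with $\Phi(x)=x\log x$ convex and $\mathcal G=\sigma(\phi)$, conditional Jensen gives $\E_{P_L}\!\left[\Phi(g^\perp_{\rm A})\right]=\E_{P_L}\!\left[\Phi\!\left(\E_{P_L}[g_{\rm A}\mid\mathcal G]\right)\right]\le\E_{P_L}\!\left[\E_{P_L}[\Phi(g_{\rm A})\mid\mathcal G]\right]=\E_{P_L}\!\left[\Phi(g_{\rm A})\right]$, that is, $H(G^\perp_{\rm A})=-\E_{P_L}[\Phi(g^\perp_{\rm A})]\ge-\E_{P_L}[\Phi(g_{\rm A})]=H(G_{\rm A})$, with nothing to prove if $H(G_{\rm A})=-\infty$; this mirrors the proof of Theorem \ref{th:entropy}. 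The step I expect to demand the most care is the middle one: checking line by line that the proof of Lemma \ref{lem:condition} survives the substitution of the non-copula distribution $G_{\rm A}$ for the copula $C$ of $\mathbf X$ — in particular that $G_{\rm A}$ and $G^\perp_{\rm A}$ place the same mass on every box $\bigtimes_{i=1}^d[0,a_i]$ with $\mathbf a\in R$, which is the fact that now replaces the use of Sklar's theorem there — and confirming that $f_{\rm A}$ is well defined and genuinely factors through $\phi$.
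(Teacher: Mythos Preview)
Your approach is correct and takes a genuinely different route from the paper. The paper defines $G^\perp_{\rm A}$ and $G^\perp_{\rm S}$ analytically, as the multilinear interpolations of $G_{\rm A}$ and $G_{\rm S}$ from $R=\bigtimes_i\mathrm{Ran}(F_i)$ to $[0,1]^d$; it then differentiates the interpolation formula case by case to obtain a density $g$ for $G^\perp_{\rm A}$ and verifies directly that $g(\mathbf U)=\E[g_{\rm A}(\mathbf U)\mid\hat{\mathbf X}]$, after which Jensen finishes as in your last step. Your construction is probabilistic instead: you carry the mixture label $W$ through the checkerboard transformation and \emph{define} $G^\perp_{\rm A},G^\perp_{\rm S}$ as conditional laws of $\mathbf U^\perp$ given $W$, so that the decomposition $C^\perp_{\mathbf X}=\lambda G^\perp_{\rm A}+(1-\lambda)G^\perp_{\rm S}$ is automatic; then you observe that Lemma~\ref{lem:condition} never uses that the input is a copula, only that the two distributions agree on $R$, and reuse it with $G_{\rm A}$ in place of $C$. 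The two constructions in fact produce the same $G^\perp_{\rm A}$ (both agree with $G_{\rm A}$ on $R$ and are cell-wise uniform), but your packaging avoids the explicit interpolation formulas and the casework differentiation. The only point you flagged---that $G_{\rm A}$ and your $G^\perp_{\rm A}$ agree on $R$---falls out immediately from your coupling: for $\mathbf a\in R$ the events $\{\mathbf U^c\le\mathbf a\}$ and $\{\mathbf U^\perp\le\mathbf a\}$ both coincide (a.s.) with $\{\mathbf X'\le\phi(\mathbf a)\}$, so conditioning on $\{W=1\}$ gives $G_{\rm A}(\mathbf a)=G^\perp_{\rm A}(\mathbf a)$ directly; the step-(ii) factorization then goes through because $\mathbf V$ is independent of $(\mathbf X',W)$, not just of $\mathbf X'$.
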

\begin{proof}
Let $F_1,\dots, F_d$  be the marginal distributions of $\mathbf X$.
For $x\in [0,1]$, let $l_i(x)=\sup\{y: y \in \mathrm{Ran}(F_i), ~y\le x\}$ and $u_i(x)=\inf\{y: y \in \mathrm{Ran}(F_i),~ y\ge x\}$.
 Define two distribution functions $G^\perp_{\mathrm A}$
and $G^\perp_{\mathrm S}$, which are linear interpolations of $G_{\mathrm A}$
and $G_{\mathrm S}$ from $\bigtimes_{i=1}^d \mathrm{Ran}(F_i) $ to $[0,1]^d$: For $\mathbf x=(x_1, \dots, x_d) \in [0,1]^d$, 
$$G_{\rm A}^\perp(\mathbf x)=\left\{\begin{aligned}
&G_{\rm A}(\mathbf x),~~ &\mathbf x \in \bigtimes_{i=1}^d \mathrm{Ran}(F_i),\\
&\sum_{ \substack{y_i \in \{l_i(x_i), u_i(x_i)\} \\i\in[d]}} \prod_{j=1}^d \beta_j(x_j,y_j)
G_{\rm A}(y_1, \dots, y_d),~~&\mathbf x \in [0,1]^d \setminus \bigtimes_{i=1}^d \mathrm{Ran}(F_i),
\end{aligned}\right.$$
 and 
$$G_{\rm S}^\perp(\mathbf x)=\left\{\begin{aligned}
&G_{\rm S}(\mathbf x),~~ &\mathbf x \in \bigtimes_{i=1}^d \mathrm{Ran}(F_i),\\
&\sum_{\substack{y_i \in \{l_i(x_i), u_i(x_i)\}\\i\in [d]}} \prod_{j=1}^d \beta_j(x_j,y_j)
G_{\rm S}(y_1, \dots, y_d),~~&\mathbf x \in [0,1]^d \setminus \bigtimes_{i=1}^d \mathrm{Ran}(F_i),
\end{aligned}\right.$$
where $$
\beta_i(x,y)=\frac{u_i(x)-x}{u_i(x)-l_i(x)} \id_{\{y=l_i(x)\}}+\frac{x-l_i(x)}{u_i(x)-l_i(x)}\left(1-\id_{\{y=l_i(x)\}}\right)
$$
for $x,y \in [0,1]$ and $i\in [d]$ with the convention $0/0=1$. 
 Note that $\beta_i(x,y)$ is linear in $x$ on each segment of $[0,1]\setminus \mathrm{Ran}(F_i)$. It is clear that $G_{\rm A}^\perp$ is continuous.

Let $g_{\rm A}$  be the density of $G_{\rm A}$ and  $g$ be the derivative of $G^\perp_{\rm A}$, respectively. 
For $\mathbf x\in \bigtimes_{i=1}^d \Ran(F_i)$,
we have 
$$g(\mathbf x)=\frac{\partial^d G_{\rm A}^\perp(\mathbf x)}{\partial x_1\cdots \partial x_d}=\frac{\partial^d G_{\rm A}(\mathbf x)}{\partial x_1\cdots \partial x_d}=g_{\rm A}(\mathbf x).$$
For $\mathbf x =(x_1, \dots, x_{d-1},x_d)$ such that $x_i\in \Ran(F_i)$ for $i\in [d-1]$ and $x_d\notin \Ran(F_d)$,
we have $$G_{\rm A}^\perp(\mathbf x)=\frac{u_d(x_d)-x_d}{u_d(x_d)-l_d(x_d)}G_{\rm A}(x_1, \dots, x_{d-1}, l_d(x_d))+\frac{x_d-l_d(x_d)}{u_d(x_d)-l_d(x_d)}G_{\rm A}(x_1, \dots, x_{d-1}, u_d(x_d)).$$
Hence,
\begin{align*}g(\mathbf x)&=\frac{\partial^d G_{\rm A}^\perp(\mathbf x)}{\partial x_1\cdots\partial x_d}\\
&=\frac{\partial^{d-1}[ G_{\rm A}(x_1, \dots, x_{d-1}, u_d(x_d))-G_{\rm A}(x_1, \dots, x_{d-1}, l_d(x_d))]}{\partial x_1\cdots \partial x_{d-1}}=\int_{l_d(x_d)}^{u_d(x_d)}g_{\rm A}(x_1, \dots, x_{d-1}, y) \d y.\end{align*}
Let $k\in [d]$. Similarly,
for $\mathbf x=(x_1, \dots, x_d)$ such that $x_i\in \Ran(F_i)$
 for $i\in [k-1]$ (with $[0]=\varnothing$) and $x_i\notin \Ran(F_i)$
 for $i\in [d]\setminus[k-1]$, we have 
 $$g(\mathbf x)=\int_{l_{k}(x_{k})}^{u_{k}(x_{k})}\dots\int_{l_d(x_d)}^{u_d(x_d)}g_{\rm A}(x_1, \dots, x_{k-1},y_{k}, \dots, y_d) \d y_{k} \cdots\d y_d.$$
Note that $u_i(x_i)=F_i(F_i^{-1}(x_i))$ and $l_i(x_i)=F_i(F_i^{-1}(x_i)-)$ for $i\in [d]$. From the discussion above, we can see that $g(\mathbf x)$ is a constant in $\bigtimes_{i=1}^d [F_i(y_i-), F_i(y_i)]$ for all $\mathbf y=(y_1, \dots, y_d) \in \R^d$.
Let $\mathbf U=(U_1, \dots, U_d)\sim \mathrm{U}([0,1]^d)$ and ${\hat \X}=\left(F_1^{-1}(U_1), \dots, F_d^{-1}( U_d)\right)$. Thus,
$g(\mathbf U)=\E[g_{\rm A}(\mathbf U)|{\hat \X}]$.
 Hence $g$ is  Lebesgue integrable and $G_{\rm A}^\perp$ is an absolutely continuous distribution function with density function $g_{\rm A}^\perp=g$.

Next, we show that $C_{\mathbf X}^\perp=\lambda G^\perp_{\rm A}+(1-\lambda)  G^\perp_{\rm S}$ and $H(G^\perp_{\rm A})\ge H( G_{\rm A})$.
For  $\mathbf x\in \bigtimes_{i=1}^d \mathrm{Ran}(F_i)$, 
$$\lambda G_{\rm A}^\perp(\mathbf x)+(1-\lambda)G_{\rm S}^\perp(\mathbf x)=\lambda G_{\rm A}(\mathbf x)+(1-\lambda)G_{\rm S}(\mathbf x)=C_{\mathbf X}(\mathbf x)=C_{\mathbf X}^\perp(\mathbf x).$$
For  $\mathbf x\in [0,1]^d \setminus \bigtimes_{i=1}^d \mathrm{Ran}(F_i)$, 
\begin{align*}\lambda G_{\rm A}^\perp(\mathbf x)+(1-\lambda)G_{\rm S}^\perp(\mathbf x)&= \sum_{\substack{y_i \in \{l_i(x_i), u_i(x_i)\}\\i\in [d]}} \prod_{j=1}^d \beta(x_j,y_j) \left(\lambda G_{\rm A}(y_1, \dots, y_d)+(1-\lambda)G_{\rm S}(y_1, \dots, y_d)\right)\\
&=\sum_{\substack{y_i \in \{l_i(x_i), u_i(x_i)\}\\i\in [d]}} \prod_{j=1}^d \beta(x_j,y_j) C_{\mathbf X}(y_1, \dots, y_d)=C_{\mathbf X}^\perp(\mathbf x).
\end{align*}
Let $h(x)=-x \log x$ for $x \in (0, \infty)$. It is clear that $h$ is a concave function.
By Jensen’s inequality, we have
\begin{align*}
H(G^\perp_{\rm A})=\E[h(g_{\rm A}^\perp(\mathbf U))]
=\E[h(\E[g_{\rm A}(\mathbf U)|{\hat \X}])]\ge \E[\E[h(g_{\rm A}(\mathbf U))|{\hat \X}]]=\E[h(g_{\rm A}(\mathbf U))]=H(G_{\rm A}).
\end{align*}
This completes the proof.
\end{proof}}

 \section{Checkerboard copula and dependence concepts}
 \label{sec:5}

In this section, we study how the checkerboard copula preserves dependence concepts.  
This question is motivated by a problem raised in the context of diversification in \cite{CEW24}, which we describe in Section \ref{sec:cew}.

\subsection{Dependence concepts}
\label{sec:51}
We first define several notions of positive dependence, introduced and studied by \cite{L66}, 
\cite{EPW67}, and \cite{BY01},
and the corresponding notions of negative dependence, introduced and studied by
\cite{L66}, \cite{AS81}, \cite{BSS82, BSS85},  \cite{JP83}, and \cite{CEW24a}. 

In what follows, for $i\in [d]$ and an $d$-dimensional random vector $\mathbf X=(X_1,\dots,X_d)$,   write  $\mathbf X_{-i}=(X_1,\dots,X_{i-1}, X_{i+1},\dots,X_d)$, and  for $A,B\subseteq [d]$, write $\mathbf X_A=(X_k)_{k\in A}$ and $\mathbf X_B=(X_k)_{k\in B}$. 
A set $S\subseteq \R^d$ is  \emph{decreasing} 
if $\mathbf x\in S$ implies $\mathbf y\in S$ for all $\mathbf y\le \mathbf x$. 

\begin{definition}\label{def:1} 
A random vector $\mathbf X$ is  
\begin{enumerate}[(i)]
\item  
\begin{enumerate}[(a)]
\item 
\emph{positively associated} (PA)   if for every pair of  subsets $A,B$ of $[d]$ and any functions  $f$ and $g$ both increasing or decreasing coordinatewise, provided the covariance below exists,
$$
\cov(f(\mathbf X_A),g(\mathbf X_B))\ge 0;
$$  
\item  \emph{negatively associated} (NA)   if for every pair of disjoint subsets $A,B$ of $[d]$ and any functions  $f$ and $g$ both increasing or decreasing coordinatewise, provided the covariance below exists,
\begin{equation*}\label{eq:NA}
\cov(f(\mathbf X_A),g(\mathbf X_B))\le 0;
\end{equation*}
\end{enumerate}
\item  
\begin{enumerate}[(a)]
\item 
\emph{positively regression dependent} (PRD) if for every $i\in[d]$,  the random variable $\E[g(\mathbf X_{-i})| X_i]$
is an increasing function of $X_i$ for any  coordinatewise increasing function $g$ such that the conditional expectation exists; 
\item \emph{negatively regression dependent} (NRD) if for every $i\in[d]$,  the random variable $\E[g(\mathbf X_{-i})| X_i]$
is a decreasing function of $X_i$ for any  coordinatewise increasing function $g$ such that the conditional expectation exists;
\end{enumerate}
 \item  
\begin{enumerate}[(a)]
\item 
 \emph{weakly positively associated} (WPA) if  
 for any $i\in[d]$,  decreasing set $S  \subseteq  \R^{d-1}$, and $x\in \R$ with $\p(X_i\le x)>0$,  
$$
 \p(\mathbf X_{-i} \in S  \mid  X_i\le  x) \ge \p(\mathbf X_{-i}\in S); $$
 \item  \emph{weakly negatively associated} (WNA) if  
 for any $i\in[d]$,  decreasing set $S  \subseteq  \R^{d-1}$, and $x\in \R$ with $\p(X_i\le x)>0$,  
\begin{equation*}\label{eq:WNA}
 \p(\mathbf X_{-i} \in S  \mid  X_i\le  x) \le \p(\mathbf X_{-i}\in S);  \end{equation*}  
 \end{enumerate} 
 \item 
 \begin{enumerate}[(a)]
\item 
 \emph{positively orthant dependent} (POD) if for all $\mathbf x =(x_1,\dots,x_d) \in\R^d$,  $\p(\mathbf X\le \mathbf x)\ge \prod_{i=1}^d\p(X_i\le x_i)$ and $\p(\mathbf X >  \mathbf x)\ge \prod_{i=1}^d\p(X_i > x_i)$;
 \item  \emph{negatively orthant dependent} (NOD) if for all $\mathbf x =(x_1,\dots,x_d) \in\R^d$,  $\p(\mathbf X\le \mathbf x)\le \prod_{i=1}^d\p(X_i\le x_i)$ and $\p(\mathbf X >  \mathbf x)\le \prod_{i=1}^d\p(X_i > x_i)$.
\end{enumerate} 
\end{enumerate}  
Moreover, we say that a distribution or a copula is  PA, PRD, WPA, POD, NA, NRD, WNA, or NOD
if the corresponding random vector is. 
\end{definition}
 Note that the definition of PA does not require $A$ and $B$ to be disjoint, whereas the definition of NA requires this.
 
The relationship between the above notions is summarized below (see e.g., \cite{CEW24a}).
$$
\mbox{PA} \Longrightarrow \mbox{WPA};~~~
\mbox{PRD} \Longrightarrow \mbox{WPA};~~~ \mbox{WPA}  \Longrightarrow \mbox{POD};
$$ 
$$
\mbox{NA} \Longrightarrow \mbox{WNA};~~~
\mbox{NRD} \Longrightarrow \mbox{WNA};~~~ \mbox{WNA}  \Longrightarrow \mbox{NOD}.
$$
Within the class of multivariate normal distributions, the four concepts of positive dependence are equivalent, and each is equivalent to having nonnegative bivariate correlation coefficients;
similarly, the four concepts of negative dependence are equivalent, and each is equivalent to having nonpositive bivariate correlation coefficients.

In the sequel, we use $\mathfrak D$ to represent one of the following: PA, PRD, WPA, POD, NA, NRD, WNA, or NOD. 
Our question is whether these properties are properties purely based on copulas.
It turns out that the checkerboard copula can help answer this question. 

\subsection{The checkerboard copula preserves dependence}

We first present a self-consistency property of those negative dependence concepts in the spirit of \cite[Property P$_6$]{JP83} for NA. 
\begin{lemma}\label{lem:2}
If $f_1, \dots, f_d$ are increasing functions and $\mathbf X$ satisfies $\mathfrak D$,  then $(f_1(X_1), \dots, f_d(X_d))$   also satisfies $\mathfrak D$.
\end{lemma}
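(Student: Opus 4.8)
The plan is to handle the eight properties in $\mathfrak D$ essentially uniformly, exploiting the fact that each of them is defined through monotone transformations of the coordinates (increasing-or-decreasing test functions for PA/NA, increasing test functions and conditioning on $X_i$ for PRD/NRD, decreasing sets and one-sided events for WPA/WNA, and orthant probabilities for POD/NOD). Since each $f_i$ is increasing, composition with $f_i$ maps increasing functions to increasing functions, decreasing sets in the relevant coordinates to decreasing sets, and half-line events $\{X_i \le x\}$ to half-line events $\{f_i(X_i) \le f_i(x)\}$ (after noting that increasingness lets us restrict attention to thresholds $x$ in the range of $f_i$, or argue via right-continuity). First I would dispose of PA and NA: for coordinatewise monotone $f,g$ and subsets $A,B$, write $\cov(f((f_i(X_i))_{i\in A}), g((f_i(X_i))_{i\in B})) = \cov(\tilde f(\mathbf X_A), \tilde g(\mathbf X_B))$ where $\tilde f = f\circ(\prod_{i\in A} f_i)$ and $\tilde g = g\circ(\prod_{i\in B} f_i)$ are again coordinatewise monotone with the same monotonicity direction as $f,g$; the conclusion is then immediate from the hypothesis on $\mathbf X$. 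POD and NOD are even easier: evaluate the orthant inequalities for $(f_1(X_1),\dots,f_d(X_d))$ at a point $\mathbf x$, and reduce to the orthant inequalities for $\mathbf X$ at the point $(f_1^{-1}(x_1),\dots)$ — more carefully, at $\sup\{y : f_i(y)\le x_i\}$ — using that $\{f_i(X_i)\le x_i\}$ is (up to a null set / right-continuity subtlety) of the form $\{X_i \le y_i\}$ and $\p(f_i(X_i)\le x_i) = \p(X_i \le y_i)$.

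For PRD and NRD, I would use the tower property: for increasing $g$ and $i\in[d]$,
$$\E[g((f_j(X_j))_{j\ne i}) \mid f_i(X_i)] = \E[\tilde g(\mathbf X_{-i}) \mid \sigma(f_i(X_i))],$$
where $\tilde g$ is increasing. Since $\sigma(f_i(X_i)) \subseteq \sigma(X_i)$, I would first compute $\E[\tilde g(\mathbf X_{-i})\mid X_i]$, which by hypothesis is an increasing (resp. decreasing) function of $X_i$, and then note that conditioning further on the coarser $\sigma(f_i(X_i))$ amounts to averaging an increasing function of $X_i$ over the level sets of $f_i$ — each such level set is an interval in $X_i$ — so the result is an increasing (resp. decreasing) function of $f_i(X_i)$. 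For WPA and WNA the argument is parallel: given a decreasing set $S\subseteq\R^{d-1}$, the set $\tilde S = \{\mathbf y : (f_j(y_j))_{j\ne i}\in S\}$ is decreasing, and $\{f_i(X_i)\le x\} = \{X_i \le x'\}$ for a suitable $x'$, so
$$\p((f_j(X_j))_{j\ne i}\in S \mid f_i(X_i)\le x) = \p(\mathbf X_{-i}\in \tilde S \mid X_i \le x') \ge \text{ (resp. }\le\text{) } \p(\mathbf X_{-i}\in \tilde S) = \p((f_j(X_j))_{j\ne i}\in S),$$
invoking WPA (resp. WNA) of $\mathbf X$.

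The main obstacle — really the only non-cosmetic one — is the handling of the thresholds for non-surjective or non-strictly-increasing $f_i$: the events $\{f_i(X_i)\le x\}$ and conditioning $\sigma$-algebra $\sigma(f_i(X_i))$ must be matched correctly to events/$\sigma$-algebras in $X_i$. The clean way is to observe that for an increasing $f_i$ and any $x$, either $\{f_i(X_i)\le x\}$ is (a.s.) empty, or all of $\Omega$, or equals $\{X_i \le x'\}$ or $\{X_i < x'\}$ for $x' = \sup\{y : f_i(y)\le x\}$; in the borderline case a short argument using right-continuity of distribution functions (or simply noting the definitions of PRD etc. are insensitive to which version one picks) closes the gap. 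I would state this as a small preliminary observation and then apply it uniformly across all six conditional/orthant cases, leaving PA and NA untouched by this issue. The whole proof is thus short: one lemma-level observation about monotone preimages, then a case-by-case but mechanical verification.
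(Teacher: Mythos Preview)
Your proposal is correct and follows essentially the same route as the paper's proof: composition for PA/NA, preimage of decreasing sets plus the threshold $x'=\sup\{t:f_i(t)\le x\}$ for WPA/WNA (and orthants), and averaging the monotone function $\E[\tilde g(\mathbf X_{-i})\mid X_i]$ over the ordered level sets of $f_i$ for PRD/NRD. The one point the paper spells out that you leave as a remark is the ``borderline'' case $\{f_i(X_i)\le x\}=\{X_i<x'\}$, which it handles exactly as you suggest, via $\lim_{t\uparrow x'}\p(\,\cdot\,,X_i\le t)$ and continuity of probability.
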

\begin{proof}
We only show the result for the concepts of negative dependence, as the case of positive dependence is similar.

The self-consistency properties of NA and NOD are shown in  \cite[Property P$_6$]{JP83} and  \cite[Lemma 1]{L66}, respectively. We will show the properties for NRD and WNA. 
Let $\mathbf Y=(f_1(X_1), \dots, f_d(X_d))$.
\begin{enumerate}

\item  Assume $\mathbf X$ is NRD. Fix $i\in [d]$. Let $g$ be a coordinatewise increasing function and $g'=g \circ(f_1, \dots, f_{i-1},f_{i+1},\dots, f_d)$. As a result,  we have $g'$ is a coordinatewise increasing function and $g(\mathbf Y_{-i})=g'(\mathbf X_{-i})$. For any $y\in \R$, let  $A_y=\{x: f_i(x)=y\}$. We have $\{Y_i=y\}=\{X_i\in A_y\}$.  Therefore, $\E[g(\mathbf Y_{-i})|Y_i=y]=\E[g'(\mathbf X_{-i})|X_i \in A_{y}]$.
Assume $y_1<y_2$. For any $x_1\in A_{y_1}$ and $x_2 \in A_{y_2}$, we have $x_1\le x_2$; hence,  $\E[g'(\mathbf X_{-i})|X_i=x_1]\ge \E[g'(\mathbf X_{-i})|X_i=x_2].$  Thus, 
\begin{align*}\E[g'(\mathbf X_{-i})|X_i \in A_{y_1}]&=\E[\E[g'(\mathbf X_{-i})|X_i]|X_i \in A_{y_1}]\\
&\ge \E[\E[g'(\mathbf X_{-i})|X_i]|X_i \in A_{y_2}]=\E[g'(\mathbf X_{-i})|X_i \in A_{y_2}],
\end{align*}
which implies that $\E[g(\mathbf Y_{-i})|Y_i=y_1]\ge \E[g(\mathbf Y_{-i})|Y_i=y_2]$; hence $\mathbf Y$ is NRD.
\item Assume $\mathbf X$ is WNA. For $i\in [d]$, let $S\subseteq \R^{d-1}$ be a decreasing set,  and $$S_i^{f}=\{ (x_1, \dots x_{i-1}, x_{i+1}, \dots, x_d): (f_1(x_1),\dots, f_{i-1}(x_{i-1}), f_{i+1}(x_{i+1}), \dots, f_d(x_d)) \in S \}.$$
It is clear that $\{\mathbf Y_{-i} \in S\}=\{\mathbf X_{-i} \in S^{f}_i\}$. For any 
 ${\mathbf x}_1 \le {\mathbf x}_2$ and ${\mathbf x}_2 \in S^{f}_i$, we have  $f_k(x_{1,k})\le f_k(x_{2,k})$ for all $k\in [d]\setminus \{i\}$. Furthermore, because $S$ is decreasing, we have ${\mathbf x}_1 \in S^{f}_i$, which implies $S^{f}_i$ is  a decreasing set. 
For any $y\in \R$ with $\p(Y_i\le y)>0$, let 
$x=\sup\{t \in \R: f_i(t)\le y\}$. 
If $f_i(x)\le y$, we have 
 $\{Y_i\le y\}=\{X_i \le x\}$ and $\p(X_i\le x)>0$. Therefore,
$$\p(\mathbf Y_{-i}\in S|Y_i\le y)=\p\left(\mathbf X_{-i}\in S^{f}_i| X_i \le x\right)\le \p\left(\mathbf X_{-i}\in S^{f}_i\right)=\p(\mathbf Y_{-i}\in S),$$
which implies that $\mathbf Y$ is WNA. If $f_i(x)>y$, we have 
 $\{Y_i\le y\}=\{X_i < x\}$ and $\p(X_i< x)>0$. Therefore,
\begin{align*}
     \p(\mathbf Y_{-i}\in S, Y_i\le y) &=\p\left(\mathbf X_{-i}\in S^{f}_i, X_i <x\right)\\
    &=\lim_{t \uparrow x} \p\left(\mathbf X_{-i}\in S^{f}_i, X_i \le t\right)\\
    &\le \lim_{t \uparrow x} \p\left(\mathbf X_{-i}\in S^{f}_i\right)\p( X_i \le t)\\
    &=\p\left(\mathbf X_{-i}\in S^{f}_i\right) \lim_{t \uparrow x}\p( X_i \le t)\\
    &=\p\left(\mathbf X_{-i}\in S^{f}_i\right)\p( X_i <x )=\p(\mathbf Y_{-i}\in S)\p( Y_i\le y),
\end{align*}
which implies that $\p(\mathbf Y_{-i}\in S| Y_i\le y)\le \p(\mathbf Y_{-i}\in S)$ and $\mathbf Y$ is WNA.\qedhere
\end{enumerate}
\end{proof}

The following theorem demonstrates that the checkerboard copula of $\X$ preserves the dependence information of $\X$. 
\begin{theorem}\label{th:gen}
A random vector $\mathbf X$ satisfies $\mathfrak D$ if and only if it has a copula that satisfies $\mathfrak D$. Moreover, the copula can be chosen as the checkerboard copula $C_{\mathbf X}^{\perp}$.
\end{theorem}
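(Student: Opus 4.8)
The plan is to prove the two directions separately, using the stochastic representation from Theorem~\ref{lem:1} together with Lemma~\ref{lem:2}. For the ``only if'' direction, suppose $\mathbf X$ satisfies $\mathfrak D$. I want to show that the checkerboard copula $C_{\mathbf X}^{\perp}$, viewed as the distribution of $\mathbf U = \mathbf U_{\mathbf X}^{\perp}$ defined via $\mathbf V_{\mathbf X}^{\perp}\sim\mathrm U([0,1]^d)$ independent of $\mathbf X$ in \eqref{eq:us}, again satisfies $\mathfrak D$. The key observation is that each $U_i = F_i(X_i-) + V_i(F_i(X_i)-F_i(X_i-))$ is, conditionally on $X_i$, a monotone (affine increasing) function of the independent uniform $V_i$, and that the family $(V_1,\dots,V_d)$ is jointly independent of $\mathbf X$ and of each other. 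So $\mathbf U$ is obtained from $\mathbf X$ by ``adding independent noise within the atoms.'' Concretely, I would fix each dependence concept in $\mathfrak D$ and verify it directly for $\mathbf U$ by conditioning on $\mathbf X$: for POD/NOD, write $\p(\mathbf U\le\mathbf u)=\E[\prod_i \p(U_i\le u_i\mid X_i)]$ because the $V_i$ are conditionally independent given $\mathbf X$, note that $u\mapsto \p(U_i\le u_i\mid X_i=x)$ is increasing in $x$ for fixed $u_i$ (a monotone reshuffling argument), and compare with $\prod_i\p(U_i\le u_i)=\prod_i\E[\p(U_i\le u_i\mid X_i)]$ using the orthant inequality for $\mathbf X$ applied to these increasing functions; for PA/NA, given increasing $f,g$ I would express $\cov(f(\mathbf U_A),g(\mathbf U_B))$ via the tower property over $\mathbf X$, using that $\mathbf U_A$ depends on $(\mathbf X,\mathbf V)$ only through an independent-noise perturbation, and invoke the covariance inequality for $\mathbf X$ together with the (conditional) independence of the $V_i$; for PRD/NRD and WPA/WNA, I would similarly push the conditioning on $U_i$ down to conditioning on $X_i$ together with the conditionally-uniform $V_i$, and use that the relevant sets/functions remain monotone. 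In each case the structure is the same: the noise $\mathbf V^{\perp}$ is ``dependence-neutral,'' so it cannot destroy a monotone dependence property of $\mathbf X$.

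For the ``if'' direction, suppose some copula $C\in\mathcal C_{\mathbf X}$ satisfies $\mathfrak D$. By Theorem~\ref{lem:1}, $C = C_{\mathbf X}^{\mathbf V}$ for some $\mathbf V\in\mathcal V_{\mathbf X}$, i.e.\ there is $\mathbf U\sim C$ with $F_i^{-1}(U_i)=X_i$ a.s.\ for all $i$. Since $F_i^{-1}$ is increasing, $\mathbf X = (F_1^{-1}(U_1),\dots,F_d^{-1}(U_d))$ is a coordinatewise increasing transform of a vector with distribution $C$, and Lemma~\ref{lem:2} immediately gives that $\mathbf X$ satisfies $\mathfrak D$. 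This direction is short. Combining the two directions with the ``moreover'' clause already established in the first part yields the theorem.

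I expect the main obstacle to be the ``only if'' direction, and within it the cases PA/NA (and to a lesser extent PRD/NRD), where I need a clean way to say ``conditionally independent increasing noise preserves positive/negative association.'' The delicate point is that for NA one needs $A$ and $B$ disjoint so that $\mathbf U_A$ and $\mathbf U_B$ use disjoint blocks of noise $\mathbf V_A,\mathbf V_B$ that are conditionally independent given $\mathbf X$; one must then justify interchanging the covariance with the conditional expectation and check that the resulting inner functions $x_A\mapsto \E[f(\mathbf U_A)\mid \mathbf X_A=x_A]$ are still coordinatewise increasing, which requires the within-atom uniform perturbation to respect the order — this is where the explicit affine form of \eqref{eq:us} and the monotonicity of $F_i$ and $F_i^{-1}$ do the work. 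For WPA/WNA a similar care is needed to handle the event $\{U_i\le u\}$ versus $\{X_i\le x\}$ or $\{X_i<x\}$ at the boundary of an atom, much as in the proof of Lemma~\ref{lem:2}; I would treat the two cases $u\in\mathrm{Ran}(F_i)$ and $u\notin\mathrm{Ran}(F_i)$ separately. Once the right conditioning/monotonicity bookkeeping is set up, each of the eight concepts reduces to a one- or two-line application of the corresponding inequality for $\mathbf X$ plus conditional independence of $\mathbf V^{\perp}$.
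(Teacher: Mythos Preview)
Your ``if'' direction matches the paper's exactly. For the ``only if'' direction you and the paper make opposite conditioning choices: you condition on $\mathbf X$ and exploit the conditional independence of the $V_i$; the paper conditions on $\mathbf V$ and uses that $\mathbf U\mid\mathbf V$ is a coordinatewise increasing transform of $\mathbf X$, hence inherits $\mathfrak D$ directly by Lemma~\ref{lem:2}, after which one averages over the independent $V_i$. Both routes are valid, and for NA/PA and NRD/PRD they are comparably clean---indeed, for NA your decomposition $\cov(f(\mathbf U_A),g(\mathbf U_B))=\cov\bigl(\E[f(\mathbf U_A)\mid\mathbf X_A],\,\E[g(\mathbf U_B)\mid\mathbf X_B]\bigr)$ with coordinatewise increasing inner functions is very tidy. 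The difference shows up in NOD/POD and WNA/WPA. After conditioning on $\mathbf X$ you need an inequality of the form $\E\bigl[\prod_i h_i(X_i)\bigr]\le\prod_i\E[h_i(X_i)]$ for the monotone functions $h_i(x)=\p(U_i\le u_i\mid X_i=x)$ (which, incidentally, are \emph{decreasing} in $x$, not increasing as you wrote), but NOD and WNA are defined only through indicator events, so ``the orthant inequality for $\mathbf X$ applied to these functions'' is not immediate: you must insert a layer-cake step $h_i=\int_0^1\id_{\{h_i>t\}}\,\d t$ and check that the superlevel sets are half-lines (resp.\ decreasing sets) before integrating the indicator inequality. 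The paper's conditioning on $\mathbf V$ avoids this entirely, since e.g.\ $\p(\mathbf U\le\mathbf t)=\E[\p(\mathbf U\le\mathbf t\mid\mathbf V)]\le\E\bigl[\prod_i\p(U_i\le t_i\mid V_i)\bigr]=\prod_i\p(U_i\le t_i)$ uses only the indicator form of NOD for $\mathbf U\mid\mathbf V$ and independence of the $V_i$, and the case-split on $u\in\mathrm{Ran}(F_i)$ that you anticipate for WNA likewise disappears. So your plan is sound, but the paper's dual choice buys a shorter argument for the orthant-type concepts.
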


\begin{proof}

 The ``if" part follows from Lemma \ref{lem:2} because, for $\mathbf U=(U_1,\dots,U_d)$ following the copula of $\X$ that satisfies $\mathfrak D$, we have  $(X_1, \dots, X_d)=\left(F^{-1}_1(U_1), \dots, F^{-1}_d(U_d)\right)$ and $F^{-1}_i$ is increasing for all $i\in[d]$.

Now we show the ``only if" part.  Let $\mathbf U=(U_1,\dots,U_d)$ be the random vector given by \eqref{eq:us} with $\mathbf V = (V_1,\dots,V_d) \sim \mathrm{U}\left([0,1]^d\right)$ independent of $\mathbf X$. 
 Hence, we have $\mathbf U \sim C_{\mathbf X}^{\perp}$ and $C_{\mathbf X}^{\perp}$ is a copula of $\mathbf X$. Note that, for any $i\in [d]$, given $V_i$, we have that $U_i$ is an increasing function of $X_i$. Hence, by Lemma \ref{lem:2}, $\mathbf X$ satisfies $\mathfrak D$ implies that $\mathbf  U|\mathbf  V$ also satisfies $\mathfrak D$.

Assume $\mathbf X$ is NA. 
For any given pair of disjoint subsets $A$, $B$ of $[d]$ and any given functions  $f$ and $g$ both increasing or decreasing coordinatewise, we have
\begin{align*}
 \cov(f(\mathbf U_{A}), g(\mathbf U_{B}))
&=\E[\cov(f(\mathbf U_{A}), g(\mathbf U_{B})|\mathbf V)]+\cov\left(\E[f(\mathbf U_{A})|\mathbf V], \E[g(\mathbf U_{B})|\mathbf V]\right)\\
&\le 0+\cov\left(\E[f(\mathbf 
 U_{A})|\mathbf V_{A}], \E[g(\mathbf U_{B})|\mathbf V_{B}]\right)=0,
\end{align*}
where the inequality follows from $\mathbf  U|\mathbf  V$ is NA, and the last equality follows from the independence between $\mathbf V_{A}$ and $\mathbf V_{B}$. Hence, $\mathbf U$ is NA.

Assume $\mathbf X$ is NRD. For any fixed $i$ and $k$, by \eqref{eq:us}, there exist $x$ and $v$ such that $\{ U_i = k\} = \{ X_i = x, V_i = v \}$. Then, for any coordinatewise increasing function $g$, by the independence between $V_i$ and $(X_i, \mathbf{U}_{-i})$, we have
\begin{equation*}
    \E[g(\mathbf U_{-i})|U_i=k]= \E[g(\mathbf U_{-i})| X_i = x, V_i = v ] =  \E[g(\mathbf U_{-i})| X_i = x] . 
\end{equation*}
Because $\mathbf{U}_{-i}$ is a function of $\mathbf{X}_{-i}$ and $\mathbf{V}_{-i}$, we can let $h$ be the function such that $g(\mathbf{U}_{-i}) = h(\mathbf{X}_{-i}, \mathbf{V}_{-i})$. Then, due to the independence between $\mathbf{V}_{-i}$ and $\mathbf{X}$, 
\begin{equation*}
    \E[g(\mathbf U_{-i})| X_i = x] = \E[h(\mathbf{X}_{-i}, \mathbf{V}_{-i})| X_i = x] = \int_{[0,1]^{d-1}}  \E[h(\mathbf{X}_{-i}, \mathbf{v}_{-i}) | X_i=x] \mathrm{d}\mathbf{v}_{-i},
\end{equation*}
where $\mathbf{v}_{-i} = (v_1, \dots, v_{i-1},v_{i+1},\dots,v_d)$. Therefore, for any $k_1 \leq k_2$, there exist $x_1$ and $x_2$ such that
\begin{align*}
    \E[g(\mathbf U_{-i})|U_i=k_1] &= \int_{[0,1]^{d-1}}  \E[h(\mathbf{X}_{-i}, \mathbf{v}_{-i}) | X_i=x_1] \mathrm{d}\mathbf{v}_{-i} , \\
    \E[g(\mathbf U_{-i})|U_i=k_2] &= \int_{[0,1]^{d-1}}  \E[h(\mathbf{X}_{-i}, \mathbf{v}_{-i}) | X_i=x_2] \mathrm{d}\mathbf{v}_{-i} .
\end{align*}
In addition, by \eqref{eq:us}, we must have $x_1 \leq x_2$. Note that given $\mathbf{v}_{-i}$,  $h(\mathbf{X}_{-i}, \mathbf{v}_{-i})$ is a coordinatewise increasing function of $\mathbf{X}_{-i}$. Hence, we have $\E[h(\mathbf{X}_{-i}, \mathbf{v}_{-i}) | X_i=x_1] \geq \E[h(\mathbf{X}_{-i}, \mathbf{v}_{-i}) | X_i=x_2]$ for any $\mathbf{v}_{-i}$. Therefore, $\E[g(\mathbf U_{-i})|U_i=k_1] \geq \E[g(\mathbf U_{-i})|U_i=k_2]$ and $\mathbf{U}$ is NRD. 

Assume $\mathbf X$ is WNA.
 For any $i\in[d]$,  decreasing set $S  \subseteq  \R^{d-1}$, and $x\in \R$ with $\p(U_i\le x)>0$,  
\begin{align*}
 \p(\mathbf U_{-i} \in S, U_i\le  x) &= \E[\p(\mathbf U_{-i} \in S,  U_i\le  x \mid \mathbf V)]\\
 &\le \E[\p(\mathbf U_{-i} \in S|\mathbf V_{-i})\p(  U_i\le  x \mid V_i)]\\
 &=\E[\p(\mathbf U_{-i} \in S|\mathbf V_{-i})]\E[\p(  U_i\le  x \mid V_i)]\\
&=\p(\mathbf U_{-i} \in S)\p(  U_i\le  x).
 \end{align*}  
Hence, $\mathbf U$ is WNA.

Assume $\mathbf X$ is NOD. For any $t_1, \dots, t_d \in \R$, we have
\begin{align*}
\p(U_1\le t_1, \dots, U_d \le t_d)&=\E[\p(U_1\le t_1, \dots, U_d \le t_d|V_1, \dots, V_d)]\\
&\le \E[\p(U_1\le t_1|V_1) \cdots\p(U_d \le t_d| V_d)]\\
&= \E[\p(U_1\le t_1|V_1)] \cdots \E[\p(U_d \le t_d|V_d)]\\
&=\p(U_1\le t_1) \cdots\p(U_d \le t_d).
\end{align*}
Similarly, we can show
$$
\p(U_1 > t_1, \dots, U_d > t_d) \le \p(U_1> t_1) \cdots\p(U_d > t_d).
$$
Hence, $\mathbf U$ is NOD.

In conclusion, if $\mathbf X$ satisfies $\mathfrak D$, then $\mathbf U$ satisfies $\mathfrak D$, where $\mathfrak D$ is one of the four concepts of negative dependence.

To show the case of positive dependence, we follow a similar route. 
We take the same $\mathbf U$ as above.  
Assume $\mathbf X$ is PA. Because $U_i |V_i$ is an increasing function of $X_i$, by Lemma \ref{lem:2}, $\mathbf U|\mathbf V$ is also PA. Thus, for any given pair of  subsets $A,B$ of $[d]$ and any given functions  $f$ and $g$ both coordinatewise increasing or decreasing, we have
\begin{align*}
\cov(f(\mathbf U_{A}), g(\mathbf U_{B}))
&=\E[\cov(f(\mathbf U_{A}), g(\mathbf U_{B})|\mathbf V)]+\cov\left(\E[f(\mathbf U_{A})|\mathbf V], \E[g(\mathbf U_{B})|\mathbf V]\right)\\
&\ge \cov\left(\E[f(\mathbf 
 U_{A})|\mathbf V_{A}], \E[g(\mathbf U_{B})|\mathbf V_{B}]\right).
\end{align*}
Moreover, given $\mathbf X$, $U_i$ is an increasing function of $V_i$. Hence, $\E[f(\mathbf U_{A})|\mathbf V_{A}]$ and $\E[f(\mathbf U_{B})|\mathbf V_{B}]$ are coordinatewise increasing (or decreasing) with respect to $\mathbf V_A$ and $\mathbf V_B$, respectively, if $f$ and $g$ are both coordinatewise increasing (or decreasing). Because $\mathbf V$ is PA, we have
$$
\cov\left(\E[f(\mathbf 
 U_{A})|\mathbf V_{A}], \E[g(\mathbf U_{B})|\mathbf V_{B}]\right) \geq 0,
$$
implying that $\mathbf U$ is PA. The proofs for other positive dependence concepts are similar. {A partial proof for POD can also be found in \citet[Proposition 2.3]{DFQU15}.}
\end{proof}

{
\cite{GN07} also show the dependence preservation results for positive orthant dependence, positive likelihood ratio dependence, and tail dependence in bivariate case.}
{\begin{remark}
We can clearly see from Theorem \ref{th:gen} that $C^\perp_{\mathbf X}=\Pi$ (the independence copula) is the only independent copula in $\mathcal{C}_{\mathbf X}$ when $\mathbf X$ is independent. This fact is used as the basis of the independence test in \cite{GNRM19}.
\end{remark}}
\section{{Two consequences of Theorem 3}}
\label{sec:app}
 
We provide two applications in this section to highlight the usefulness of Theorem \ref{th:gen}.
 
\subsection{Diversification penalty}\label{sec:cew}
 
For random variables $X$ and $Y$, let $X\ge_{\rm st} Y$ represent $\p(X>x)\ge \p(Y>x)$ for all $x\in \R$; this is called the stochastic order.
  \cite{CEW24a, CEW24} studied the problem of  diversification penalty;
that is, whether 
\begin{align}
\label{eq:R2-new}
X\le_{\rm st} \sum_{i=1}^d \theta_i X_i ~\mbox{for all $(\theta_1,\dots,\theta_d)\in \Delta_d$, where $X,X_1,\dots,X_d$ are identically distributed,}
\end{align} 
holds under certain marginal distributions and dependence structures. Here, $\Delta_d$ is the standard simplex defined by $\Delta_d = \{(\theta_1,\dots,\theta_d) \in [0,1]^d: \theta_1+\dots+\theta_d=1 \}$. 
When $X$ is interpreted as a loss,  \eqref{eq:R2-new} intuitively means that the non-diversified portfolio $X$ is less dangerous than the diversified portfolio $\sum_{i=1}^d \theta_i X_i $.
This seems counter-intuitive at first glance, but it indeed happens in the model of
  \cite{CEW24a}, where  $X$ has infinite mean.

Define the set, for some dependence concept $\mathfrak D$ in Section \ref{sec:51}, 
$$
\mathcal F_{\mathfrak D}=\{\mbox{distribution of $X$}:\mbox{\eqref{eq:R2-new} holds for all $(X_1,\dots,X_d)$ that satisfy  $\mathfrak D$}\}.
$$ 
  \cite{CEW24a} showed that the Pareto(1) distribution belongs to $\mathcal F_{\mathrm{WNA}}$, and hence also to $\mathcal F_{\mathrm{NA}}$, $\mathcal F_{\mathrm{NRD}}$, and  $ \mathcal F_{\mathrm{IN}} $, where IN stands for independence.
Moreover, \cite[Proposition 1]{CEW24} showed that 
$\mathcal F_{\mathfrak D}$  for $\mathfrak D$ being WNA, NA, or IN  is closed under strictly increasing convex transforms on the random variables. 
Our next result, which relies on our Theorem \ref{th:gen},
addresses non-strictly increasing $f$ and other notions of dependence, thus generalizing the above result.

\begin{proposition}
\label{prop:cew} Each of $\mathcal F_{\mathfrak D}$ is closed under increasing convex transforms on the random variable. 
\end{proposition}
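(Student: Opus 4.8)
The plan is to reduce the statement about increasing convex transforms to the already-established facts, using Theorem~\ref{th:gen} to pass to the checkerboard copula and thereby remove the obstruction that prevented \citet[Proposition~1]{CEW24} from handling non-strictly increasing maps. Fix $\mathfrak D \in \{\mathrm{WNA}, \mathrm{NA}, \mathrm{IN}\}$ and let $F$ be the distribution of a random variable $X$ with $F \in \mathcal F_{\mathfrak D}$; let $\phi:\mathrm{Ran}(F) \to \R$ (extended to an interval containing the support) be increasing and convex. We must show that the distribution $G$ of $\phi(X)$ also lies in $\mathcal F_{\mathfrak D}$, i.e.\ that \eqref{eq:R2-new} holds with $X$ replaced by $\phi(X)$, for every $(Y_1,\dots,Y_d)$ identically distributed as $\phi(X)$ satisfying $\mathfrak D$.

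The key steps, in order. First, given such a vector $(Y_1,\dots,Y_d)$ with common distribution $G$, I would pull it back to a vector with marginal $F$: writing $Y_i = \phi(X_i)$ is not literally possible when $\phi$ is not injective, so instead I work with probability integral transforms. Let $\mathbf U = (U_1,\dots,U_d)$ be a probability integral transform of $(Y_1,\dots,Y_d)$ as in \eqref{eq:us}, so $\mathbf U \sim C$ for some $C \in \mathcal C_{(Y_1,\dots,Y_d)}$, and set $X_i = F^{-1}(U_i)$. Then $(X_1,\dots,X_d)$ has identical marginals $F$, and by Lemma~\ref{lem:2} (self-consistency under increasing maps) together with Theorem~\ref{th:gen}, I may choose $\mathbf U \sim C^{\perp}_{(Y_1,\dots,Y_d)}$ so that $(X_1,\dots,X_d)$ satisfies $\mathfrak D$ — this is exactly where Theorem~\ref{th:gen} earns its keep. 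Second, since $F \in \mathcal F_{\mathfrak D}$, \eqref{eq:R2-new} gives $X \le_{\mathrm{st}} \sum_{i=1}^d \theta_i X_i$ for all $(\theta_1,\dots,\theta_d) \in \Delta_d$. Third, I would like to apply $\phi$ and convexity: for $\boldsymbol\theta \in \Delta_d$, convexity of $\phi$ yields $\phi\bigl(\sum_i \theta_i X_i\bigr) \le \sum_i \theta_i \phi(X_i)$ pointwise; and since $\phi$ is increasing, $X \le_{\mathrm{st}} \sum_i \theta_i X_i$ implies $\phi(X) \le_{\mathrm{st}} \phi\bigl(\sum_i \theta_i X_i\bigr)$. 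Combining, $\phi(X) \le_{\mathrm{st}} \sum_i \theta_i \phi(X_i)$. Fourth, I must check that $\phi(X_i) \overset{\mathrm d}{=} Y_i$: by construction $F^{-1}(U_i) = X_i$ a.s.\ and $U_i$ is a probability integral transform of $Y_i$, so $\phi(X_i) = \phi(F^{-1}(U_i))$; one verifies $\phi \circ F^{-1} = G^{-1}$ on $(0,1)$ up to a null set (both are the quantile function of $\phi(X)$), hence $\phi(X_i) = G^{-1}(U_i) \overset{\mathrm d}{=} \phi(X) \overset{\mathrm d}{=} Y_i$. Thus $\phi(X) \le_{\mathrm{st}} \sum_i \theta_i Y_i$ for all $\boldsymbol\theta \in \Delta_d$, which is \eqref{eq:R2-new} for $G$, so $G \in \mathcal F_{\mathfrak D}$.

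The main obstacle is the bookkeeping around non-injectivity of $\phi$: when $\phi$ is merely increasing (not strictly), the naive identity $X_i = \phi^{-1}(Y_i)$ fails, and one genuinely needs the probability-integral-transform construction plus Theorem~\ref{th:gen} to produce a vector $(X_1,\dots,X_d)$ with marginal $F$ that still carries the dependence property $\mathfrak D$ — this is the crux, and the reason the result was previously limited to strictly increasing transforms. A secondary technical point is the a.s.\ identity $\phi \circ F^{-1} = G^{-1}$; this is routine (it is the standard fact that the quantile function of a monotone transform is the transform of the quantile function, valid for left-continuous increasing $\phi$ up to the at most countably many discontinuities, which are Lebesgue-null), but it should be stated carefully. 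Finally I would note the direction of dependence is irrelevant: the argument uses only that $\mathfrak D$ is preserved under coordinatewise increasing maps (Lemma~\ref{lem:2}) and that \eqref{eq:R2-new} is a stochastic-order statement stable under increasing convex post-composition, so the same proof covers WNA, NA and IN uniformly, and one could equally record the analogous positive-dependence variants if desired.
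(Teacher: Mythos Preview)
Your approach is essentially identical to the paper's: pass to $\mathbf U \sim C^{\perp}_{(Y_1,\dots,Y_d)}$ via Theorem~\ref{th:gen}, set $X_i = F^{-1}(U_i)$ so that $(X_1,\dots,X_d)$ has marginals $F$ and satisfies $\mathfrak D$ (Lemma~\ref{lem:2}), apply \eqref{eq:R2-new} for $F$, then push forward by $\phi$ using monotonicity and convexity. The paper likewise identifies $\phi\circ F^{-1}$ as the quantile function $G^{-1}$ of $Y$ to recover the $Y_i$ from the $X_i$.

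There is one slip to tighten. In your fourth step you only record $\phi(X_i) \overset{\mathrm d}{=} Y_i$ marginally, but that does not by itself give $\sum_i \theta_i \phi(X_i) \overset{\mathrm d}{=} \sum_i \theta_i Y_i$, which is what the conclusion $\phi(X) \le_{\mathrm{st}} \sum_i \theta_i Y_i$ requires. What you need---and what your construction actually delivers---is the almost-sure identity $\phi(X_i) = G^{-1}(U_i) = Y_i$: the first equality is your $\phi\circ F^{-1} = G^{-1}$, and the second holds because $U_i$ was built as a probability integral transform of $Y_i$, so $G^{-1}(U_i) = Y_i$ a.s. With that in place the vectors $(\phi(X_1),\dots,\phi(X_d))$ and $(Y_1,\dots,Y_d)$ coincide a.s.\ and the conclusion follows. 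Also, the proposition in the paper covers all eight dependence concepts $\mathfrak D$ from Definition~\ref{def:1} (plus IN), not just WNA, NA, IN; as you observe at the end, the argument is uniform in $\mathfrak D$.
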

\begin{proof}
Below we first show that each of $\mathcal F_{\mathfrak D}$ is closed under strictly increasing convex transforms on the random variable, that is,  if the distribution of $X$ is in $\mathcal F_{\mathfrak D}$, so is the distribution of $f(X)$ for a strictly increasing convex $f$.  
Assume that $F\in \mathcal F_{\mathfrak D}$, $X$ follows $F$,
and $Y=f(X)$, where $f$ is strictly increasing and convex. 
Because $f$ is strictly increasing, if $(Y_1,\dots,Y_d)$ satisfies $\mathfrak D$, so does $(X_1,\dots,X_d)$, where $X_i=f^{-1}(Y_i)$ for $i\in [d]$,  by Lemma \ref{lem:2}. 
Because each of $X,X_1,\dots,X_d$ has a distribution $F\in \mathcal F_{\mathfrak D}$, we have $X\le_{\rm st} \sum_{i=1}^d \theta_i X_i$,
and this gives, using the convexity of $f$, 
\begin{equation}\label{eq:cew2} Y=  f(X)\le_{\rm st}f\left( \sum_{i=1}^d \theta_i X_i\right) \le \sum_{i=1}^d \theta_i  f (X_i) = \sum_{i=1}^d \theta_i Y_i  .\end{equation}

To address the case that $f$ is not strictly increasing, Theorem \ref{th:gen} allows us to find the above $(X_1,\dots,X_d)$ that satisfies $\mathfrak D$ and such that $Y_i=f(X_i)$ for $i\in [d]$.
In particular, using Theorem \ref{th:gen}, we can construct $(U_1,\dots,U_d)$ that follows the checkerboard copula of $(Y_1,\dots,Y_d)$ and satisfies $\mathfrak D$, such that 
  $$
  (Y_1,\dots,Y_d)  = (f\circ g (U_1),\dots,f\circ g(U_d)),
  $$
   where 
    $g$ is the quantile function of $X$ and $f\circ g$ is the quantile function of $Y$.
    Setting $(X_1,\dots,X_d)= (g(U_1),\dots,g(U_d))$, we get that  $(X_1,\dots,X_d)$ satisfies $\mathfrak D$, and this leads to \eqref{eq:cew2}. 
\end{proof}

\subsection{Induced order statistics}\label{sec:ios}

Here we demonstrate another application of Theorem \ref{th:gen} in characterizing the distribution of induced order statistics. Consider $N$ independent and identically distributed bivariate random vectors
\begin{equation*}
        \begin{pmatrix} \xi_{1} \\ \eta_1 \end{pmatrix}, \begin{pmatrix} \xi_{2} \\ \eta_2 \end{pmatrix} , \dots, \begin{pmatrix} \xi_{N} \\ \eta_N \end{pmatrix}.
\end{equation*}
Note that, for $i\neq j$, $(\xi_i, \eta_i)$ and $(\xi_j, \eta_j)$ are independent and identically distributed, but $\xi_i$ and $\eta_i$ may be correlated and have different marginal distributions. We rank these bivariate vectors according to their first components, $\xi_i$:
\begin{equation}
\label{equ:ios}
    \begin{pmatrix} \xi_{1:N} \\ \eta_{[1:N]} \end{pmatrix}, \begin{pmatrix} \xi_{2:N} \\ \eta_{[2:N]} \end{pmatrix} , \dots, \begin{pmatrix} \xi_{N:N} \\ \eta_{[N:N]} \end{pmatrix},
\end{equation}
where $\xi_{1:N} \leq \xi_{2:N} \leq \cdots \leq \xi_{N:N}$ are the order statistics of $\xi_1, \xi_2, \dots, \xi_N$. The notation $\eta_{[i:N]}$ represents the $i$-th \textit{induced order statistic} \citep{bhattacharya:1974}, where the order is induced by another variable $\xi_i$. The induced order statistics $\eta_{[1:N]},\dots,\eta_{[N:N]}$ are also referred to as \textit{concomitants} of the order statistics $\xi_{1:N},\dots,\xi_{N:N}$ \citep{david:1973}.

In the context of constructing impact portfolios, \cite{LWZZ22} investigated the joint distribution of $(\eta_{[1:N]}, \dots, \eta_{[N:N]})$. In particular, they proved a representation theorem for the joint distribution of $(\eta_{[1:N]}, \dots, \eta_{[N:N]})$ using the copula of $(\xi_i,\eta_i)$. Furthermore, they demonstrated that if $\xi_i$ is not continuously distributed, the representation theorem holds if and only if the copula of $(\xi_i,\eta_i)$ is chosen as the (bivariate) checkerboard copula in this paper. This reveals a potential application of the checkerboard copula in portfolio construction. 

\cite{LWZZ22} also showed that the rank of the odd-order moments of induced order statistics relies on the copula of $(\xi_i,\eta_i)$. Assume that $C$ is a copula of $(\xi_i,\eta_i)$. \cite[Theorem EC.5]{LWZZ22} proved that, for any $k=0,1,\dots$, if $C$ is PRD, we have
    \begin{equation}\label{equ:ios_increasing}
        \mathbb{E} \left( \eta_{[1:N]}^{2k+1} \right) \leq \mathbb{E}\left(\eta_{[2:N]}^{2k+1}\right) \leq \cdots \leq \mathbb{E}\left(\eta_{[N:N]}^{2k+1}\right),
    \end{equation}
    and if $C$ is NRD, we have
    \begin{equation}\label{equ:ios_decreasing}
        \mathbb{E}\left(\eta_{[1:N]}^{2k+1}\right) \geq \mathbb{E}\left(\eta_{[2:N]}^{2k+1}\right) \geq \cdots \geq \mathbb{E}\left(\eta_{[N:N]}^{2k+1}\right).
    \end{equation}
In particular, the copula $C$ can be chosen as the checkerboard copula. Therefore, using Theorem \ref{th:gen}, we directly obtain the following result. 

\begin{proposition}
\label{prop:ios} For any $k=0,1,\dots$, \eqref{equ:ios_increasing} holds if $(\xi_i,\eta_i)$ is PRD, and \eqref{equ:ios_decreasing} holds if $(\xi_i,\eta_i)$ is NRD. 
\end{proposition}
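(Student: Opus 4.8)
The plan is to combine Theorem \ref{th:gen} with \citet[Theorem EC.5]{LWZZ22}. The latter is stated in terms of a copula $C$ of $(\xi_i,\eta_i)$, and in the case where $\xi_i$ has atoms the induced order statistics $\eta_{[1:N]},\dots,\eta_{[N:N]}$ appearing in \eqref{equ:ios_increasing}--\eqref{equ:ios_decreasing} are the ones produced by the representation theorem of \cite{LWZZ22}, which is valid precisely when $C$ is taken to be the bivariate checkerboard copula. Hence it suffices to exhibit a single copula of $(\xi_i,\eta_i)$ that is both the checkerboard copula and PRD (resp.\ NRD), and then invoke \citet[Theorem EC.5]{LWZZ22}.

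First I would record that $C^\perp_{(\xi_i,\eta_i)}$ is the checkerboard copula by Definition \ref{def:0}, and that PRD and NRD are among the dependence concepts $\mathfrak D$ covered by Theorem \ref{th:gen}. Therefore, if $(\xi_i,\eta_i)$ is PRD, Theorem \ref{th:gen} guarantees that $(\xi_i,\eta_i)$ has a copula satisfying PRD and that this copula may be chosen to be $C^\perp_{(\xi_i,\eta_i)}$; the analogous statement with NRD in place of PRD handles the negative case.

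Putting these together: assume $(\xi_i,\eta_i)$ is PRD and take $C=C^\perp_{(\xi_i,\eta_i)}$. By the previous paragraph, $C$ is a PRD copula of $(\xi_i,\eta_i)$ and it is the checkerboard copula, so the representation theorem of \cite{LWZZ22} applies and identifies the joint distribution of $(\eta_{[1:N]},\dots,\eta_{[N:N]})$; applying \citet[Theorem EC.5]{LWZZ22} to this $C$ yields \eqref{equ:ios_increasing}. Replacing PRD by NRD throughout and using the NRD branch of \citet[Theorem EC.5]{LWZZ22} gives \eqref{equ:ios_decreasing}. When $\xi_i$ is continuously distributed the argument degenerates to the classical case: the copula of $(\xi_i,\eta_i)$ is unique, coincides with $C^\perp_{(\xi_i,\eta_i)}$, and PRD/NRD of the vector is the same as PRD/NRD of that copula, so the same application of \citet[Theorem EC.5]{LWZZ22} works verbatim.

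The only real subtlety — and the step I would be most careful about — is matching conventions: one must verify that the induced order statistics in \eqref{equ:ios_increasing}--\eqref{equ:ios_decreasing} are exactly those delivered by the \cite{LWZZ22} representation under the checkerboard copula (rather than under some other tie-breaking rule when $\xi_i$ has atoms), so that the moment inequalities transported from \citet[Theorem EC.5]{LWZZ22} are statements about the objects in the proposition. Once that identification is in place there is no further computation, and the proposition follows as a direct corollary of Theorem \ref{th:gen} and \citet[Theorem EC.5]{LWZZ22}.
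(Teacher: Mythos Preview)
Your proposal is correct and follows essentially the same route as the paper: apply Theorem \ref{th:gen} to conclude that the checkerboard copula $C^\perp_{(\xi_i,\eta_i)}$ is PRD (resp.\ NRD) whenever $(\xi_i,\eta_i)$ is, and then invoke \citet[Theorem EC.5]{LWZZ22} with $C=C^\perp_{(\xi_i,\eta_i)}$. The paper treats this as an immediate corollary without further detail, so your added remark about matching the tie-breaking convention to the \cite{LWZZ22} representation is a welcome clarification rather than a departure.
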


The difference between Proposition \ref{prop:ios} and \cite[Theorem EC.5]{LWZZ22} is that the latter imposes the dependence assumption (PRD or NRD) on the copula of $(\xi_i,\eta_i)$, while the former imposes a more natural assumption on the random vector $(\xi_i,\eta_i)$ directly, which is only possible due to our Theorem \ref{th:gen}.



{\section{Applications in co-risk measures and portfolio selection}\label{sec:application_corisk}

In this section, we use both numerical and empirical experiments to show that the choice of copula impacts the calculation of co-risk measures when the marginal distributions are not continuous. In particular, we consider  Marginal ES as defined in \eqref{equ:MES}, which is discussed in our third motivating question in Section \ref{sec:1}. Section \ref{subsec:numerical} presents a numerical experiment to show that different copula choices can lead to varying Marginal ES results. Section \ref{subsec:empirical} uses real data from the U.S. stock market to illustrate how the choice of copula affects the performance of the portfolio with minimum Marginal ES. Our results not only highlight the importance of copula selection in the computation of co-risk measures in financial practice, but also show that the checkerboard copula is often the most convenient and natural choice that can produce reliable results.

\subsection{Numerical experiment}\label{subsec:numerical}

Consider a bivariate normal distribution with marginals
$\mathrm{N}(0,\sigma^2)$ and correlation $r$. Denote this bivariate distribution by $F_r$. We choose this distribution because it is well known that, for $(X_1, X_2) \sim F_r$, the Marginal ES of $X_2$ given $X_1$ at level $p \in (0,1)$ can be explicitly computed as
\begin{equation}\label{equ:MES_normal}
    \rho(X_2|X_1) = \mathbb{E}[X_2 | X_1 > \Phi^{-1}(p)] = \frac{r \sigma}{1-p} \varphi\left( \Phi^{-1}(p) \right),
\end{equation}
where $\varphi$ and $\Phi$ are the density and distribution functions of $\mathrm{N}(0,1)$, respectively.

Now we conduct a numerical experiment based on this bivariate normal distribution. We draw 1,000 bivariate random vectors, $\left(X_1^{(1)},X_2^{(1)}\right), \dots, \left(X_1^{(1,000)},X_2^{(1,000)}\right)$, from $F_r$. These random values are then rounded to one decimal place to estimate an empirical bivariate distribution, denoted by $\hat{F}$. Thus, $\hat{F}$ is a discrete bivariate distribution on the discrete grid caused by rounding. Next, assuming that $(\hat{X}_1, \hat{X}_2) \sim \hat{F}$, we compute the Marginal ES of $\hat{X}_2$ given $\hat{X}_1$ at level $p \in (0,1)$, which is $\rho(\hat{X}_2|\hat{X}_1)$ defined as \eqref{equ:MES}. When calculating the Marginal ES, the following two copulas of $(\hat{X}_1, \hat{X}_2)$ given by \eqref{eq:us} are considered:
\begin{enumerate}[(i)]
    \item Let $(V_1,V_2) \sim \mathrm{U}\left([0,1]^2 \right)$ be independent of $(\hat{X}_1, \hat{X}_2)$. That is to use the checkerboard copula $C^\perp$. 
    \item Let $V_2 \sim \mathrm{U}[0,1]$ be independent of $\hat{X}_2$. In addition, given $\hat{X}_2$, let $V_2$ and $\hat{X}_1$ be comonotonic. Let $V_1 \sim \mathrm{U}[0,1]$ be independent of $\hat{X}_1$, $\hat{X}_2$, and $V_2$. We denote this copula by $C^\mathrm{+}$. 
\end{enumerate}
Therefore, we obtain different values of $\rho(\hat{X}_2|\hat{X}_1)$ using the two different copulas. 

Given $(r,p)$, we run the simulation procedure described above 10,000 times and calculate the Marginal ES under both copulas for each run. We choose $\sigma=10$. Table \ref{tab:simulation} shows the Marginal ES given by \eqref{equ:MES_normal} (normal formula), the average Marginal ES value, and the mean squared error (MSE)---the mean squared difference between $\rho(\hat{X}_2|\hat{X}_1)$ and the value in \eqref{equ:MES_normal}---across the 10,000 runs for each of the two copulas. 

\begin{table}[t]
\centering
\caption{Simulation results for computing Marginal ES under two copulas.\label{tab:simulation}}
\begin{tabular}{c|ccc|ccc|ccc}
\toprule
\multicolumn{1}{l}{} & \multicolumn{3}{|c}{$p=0.9$} & \multicolumn{3}{|c}{$p=0.95$} & \multicolumn{3}{|c}{$p=0.975$} \\\midrule
$\rho$               & 0.2     & 0.3     & 0.4     & 0.2      & 0.3     & 0.4     & 0.2      & 0.3      & 0.4     \\\midrule
normal formula        & 3.510   & 5.265   & 7.020   & 4.125    & 6.188   & 8.251   & 4.676    & 7.013    & 9.351   \\
average with $C^\perp$  & 3.502   & 5.257   & 7.001   & 4.116    & 6.171   & 8.234   & 4.688    & 6.992    & 9.330   \\
average with $C^+$      & 3.985   & 5.730   & 7.455   & 4.687    & 6.726   & 8.767   & 5.338    & 7.615    & 9.934   \\
MSE with $C^\perp$  & 0.940   & 0.920   & 0.891   & 1.914    & 1.796   & 1.673   & 3.725    & 3.636    & 3.329   \\
MSE  with $C^+$      & 1.225   & 1.192   & 1.131   & 2.348    & 2.173   & 2.020   & 4.357    & 4.187    & 3.796   \\\bottomrule
\end{tabular}
\end{table}

Table \ref{tab:simulation} illustrates that the choice of copula affects the calculation of Marginal ES. Under our setup, the Marginal ES computed using the checkerboard copula is, on average, closer to the result obtained using the normal formula \eqref{equ:MES_normal}. This demonstrates that the checkerboard copula is a good candidate for computing co-risk measures for non-continuous random variables.

\subsection{Empirical study}\label{subsec:empirical}
To further demonstrate the importance of copula selection when computing co-risk measures in financial practice, we use real stock data to calculate the Marginal ES. We obtain daily returns of the S\&P 500 Index and five widely traded US stocks---Microsoft, Apple, Google, Nvidia, and Amazon---from 2005 to 2023.\footnote{Our data comes from the Center for Research in Security Prices (CRSP). We use data starting in 2005 because Google became publicly traded in August 2004.} To make the distribution discrete, we classify market conditions into five groups based on the daily returns of the S\&P 500 Index: $(-\infty, -3\%]$, $(-3\%, -1\%]$, $(-1\%, +1\%]$, $(+1\%, +3\%]$, and $(+3\%, +\infty)$. These five conditions represent very bad, bad, fair, good, and very good, with corresponding values of $-2$, $-1$, $0$, $+1$, and $+2$, respectively. 

Next, for each individual stock, we use the negative values of its daily returns along with these market condition values over the entire period to estimate an empirical bivariate distribution. Then, based on the empirical bivariate distribution, the two copulas described in Section \ref{subsec:numerical} are applied to calculate the Marginal ES of the stock's loss given the market condition. We also present results computed using the empirical bivariate distributions of daily stock returns and S\&P 500 Index returns directly, without classifying the market conditions.

Table \ref{tab:empirical_MES} shows the Marginal ES of the five stocks under the two copulas, along with the result computed directly from the empirical distributions. The Marginal ES values differ across the three methods, and the results from the checkerboard copula, $C^\perp$, are lower than those from the alternative copula, $C^+$.
\begin{table}[t]
\centering
\caption{Marginal ES of the loss of five individual stocks given the loss of the S\&P 500 Index.\label{tab:empirical_MES}}
\begin{tabular}{cccccc}
\toprule
\multicolumn{1}{l}{} & \multicolumn{1}{c}{Microsoft} & \multicolumn{1}{c}{Apple} & \multicolumn{1}{c}{Google} & \multicolumn{1}{c}{Nivdia} & \multicolumn{1}{c}{Amazon} \\\midrule
$C^\perp$  & 3.518\%                       & 3.545\%                   & 3.455\%                    & 5.043\%                    & 3.712\%                    \\
$C^+$      & 4.404\%                       & 4.752\%                   & 4.523\%                    & 6.686\%                    & 4.998\%                   \\
empirical distribution & 3.845\% & 3.905\% & 3.735\% & 5.456\% & 4.054\% \\\bottomrule
\end{tabular}
\end{table}

Different choices of copulas can also result in varying financial performance for the portfolio with minimum Marginal ES (minMES). To demonstrate this, we construct minMES portfolios for the five stocks as follows. At the beginning of year $t$, we determine the weights of the five stocks that minimize the Marginal ES of the portfolio's loss given the market condition, using data from year $t-1$. The optimization is subject to the constraints that all weights must be non-negative and sum to 1. These optimal weights are then held throughout year $t$.

Figure \ref{fig:empirical_performance} and Table \ref{tab:empirical_metric} show the cumulative portfolio values and performance metrics of this minMES strategy for $p=0.975$ under the three methods over the entire sample period, respectively. We find that the choice of copula significantly impacts the financial performance of the minMES portfolio. Furthermore, in our empirical study, the checkerboard copula generally achieves better performance, demonstrating that it is a convenient and effective choice for producing reliable results for the considered dataset.
Certainly, we do not claim that this advantage is profitable in general, which requires comprehensive empirical analysis.

\begin{figure}[t]
    \centering
        \includegraphics[width=0.80\linewidth]{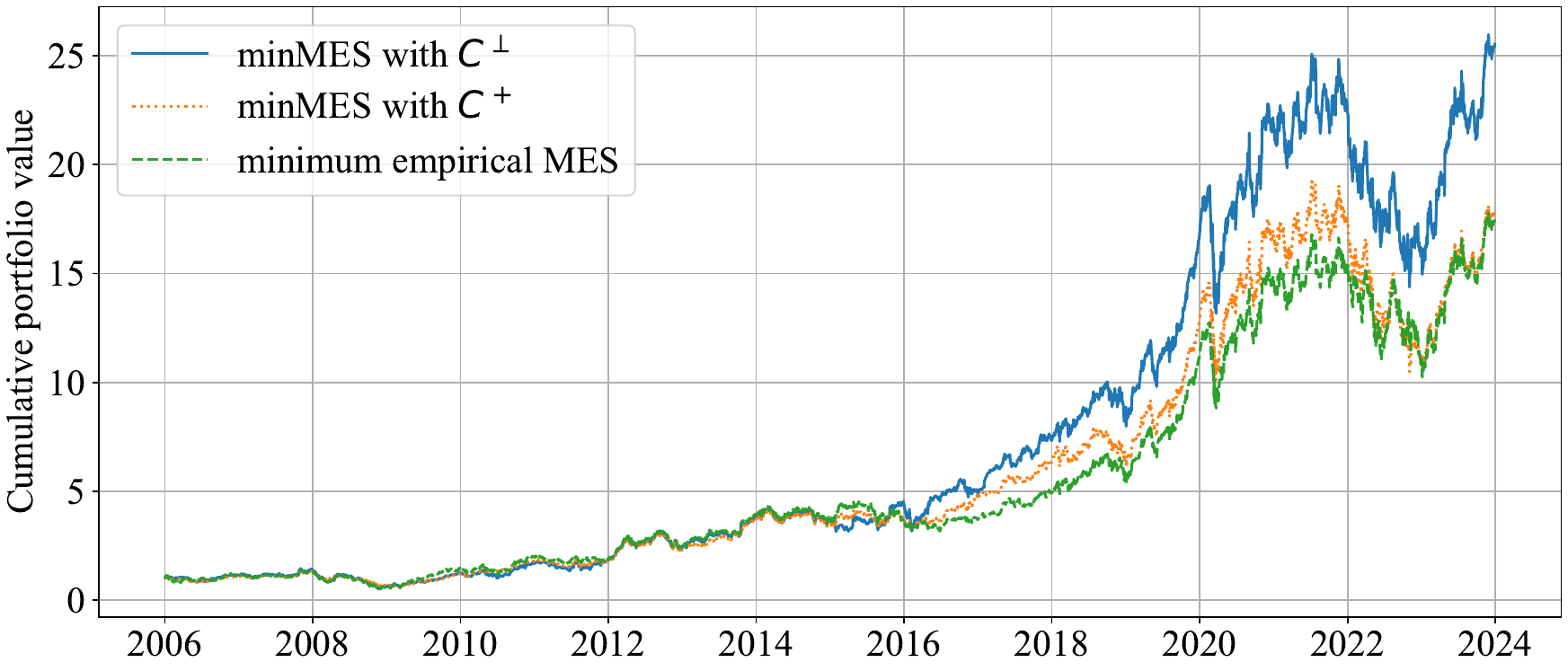}
    \caption{Cumulative values of the minMES portfolios.\label{fig:empirical_performance}}
\end{figure}

\begin{table}[t]
\centering
\caption{Performance metrics (average annual returns, standard deviation of annual returns, and Sharpe ratio) for the minMES portfolios. The Sharpe ratio is calculated assuming a risk-free rate of $3\%$.\label{tab:empirical_metric}}
\begin{tabular}{cccc}
\toprule
metrics  & \multicolumn{1}{c}{average return} & \multicolumn{1}{c}{standard deviation} & \multicolumn{1}{c}{Sharpe ratio} \\\midrule
$C^\perp$            & 22.34\%                            & 29.89\%                                & 0.6470                                            \\
$C^+$                & 19.73\%                            & 27.65\%                                & 0.6050                                   \\
empirical distribution & 20.11\% & 29.92\% & 0.5721 \\
\bottomrule
\end{tabular}
\end{table}
}

\section{Conclusion} 
\label{sec:conclusion}

We discussed the choice of copula when the marginal distributions are not necessarily continuous. Among all the choices of copulas for a given random vector, the checkerboard copula is the most convenient and natural selection in applications such as simulating from the copula, stressing the distribution, and computing a co-risk measure. 
It is shown that the checkerboard copula is the most unbiased choice in the sense that it has the largest Shannon entropy among all possible copulas for a given random vector. Moreover, the checkerboard copula can preserve the dependence information of the underlying random vector. This preservation property is applied to 
identify suitable distributions 
in the context of diversification penalty studied by \cite{CEW24a, CEW24}
and to 
determine the ranks of the moments of induced order statistics in the context of impact portfolios studied by \cite{LWZZ22}. {Finally,
our results indicate that the choice of copula significantly affects the calculation of co-risk measures when the marginal distributions are not continuous. Through numerical experiments and empirical studies, we find that the checkerboard copula can produce reliable results when computing Marginal ES and demonstrate strong performance when constructing minMES portfolios. }

\end{document}